\def\Z{\mathbb{Z}}
\def\R{\mathbb{R}}
\def\F{\mathcal{F}}
\newcommand{\C}{\mathcal{C}}
\newcommand{\rel}{\mathcal{R}}
\newcommand{\N}{\mathbb{N}}
\newcommand{\p}{\mathcal{P}}
\newcounter{thm}
\newcounter{ex}
\newcounter{re}
\newtheorem{Theorem}[thm]{Theorem}
\newtheorem{Lemma}[thm]{Lemma}
\newtheorem{Proposition}[thm]{Proposition}
\newtheorem{remark}[thm]{Remark}
\newtheorem{Definition}[thm]{Definition}
\title[The KP hierarchy on non-formal operators]{On the Cauchy problem for a Kadomtsev-Petviashvili hierarchy on non-formal operators and its relation with a group of diffeomorphisms}
\author[J.-P. Magnot and E.G. Reyes]{Jean-Pierre Magnot$^1$ and 
Enrique G. Reyes$^2$}
\address{\small $^1$: {LAREMA, Universit\'e d’Angers, 2 Bd Lavoisier, 
49045 Angers cedex 1, France and Lyc\'ee Jeanne d'Arc, 40 avenue de Grande Bretagne, 63000 Clermont-Ferrand, France}}
\email{\small magnot@math.univ-angers.fr; jean-pierr.magnot@ac-clermont.fr}
\address{\small $^2$:
	Departamento de Matem\'{a}tica y Ciencia de la Computaci\'{o}n,
	Universidad de Santiago de Chile, Casilla 307 Correo 2, Santiago,
	Chile. }\email{\small enrique.reyes@usach.cl;
	e\_g\_reyes@yahoo.ca}
\begin{document}

\begin{abstract}
{
We establish a rigorous link between infinite-dimensional regular 
Fr\"olicher Lie groups built out of non-formal pseudodifferential 
operators and the Kadomtsev-Petviashvili hierarchy.  
We introduce a version of the Kadomtsev-Petviashvili hierarchy on a regular Fr\"olicher Lie group of series of non-formal odd-class 
pseudodifferential operators.  We 
solve its corresponding Cauchy problem, and we establish a link between the dressing operator for our hierarchy and the action of diffeomorphisms and non-formal Sato-like operators on jet spaces. In appendix, we describe the group of Fourier integral operators in which this correspondence seems to take place. Also, motivated by Mulase's work on the KP hierarchy, we prove a group factorization theorem for
our group of Fourier integral operators. 
}
\end{abstract}

\maketitle

\textit{Keywords:} Kadomtsev-Petviashvili hierarchy, Mulase factorization, infinite jets, Fr\'echet Lie groups, 
Fourier-integral operators, odd-class pseudodifferential operators.

\smallskip

\smallskip

\textit{MSC(2010):} 35Q51; 37K10; 37K25; 37K30; 58J40 Secondary: 58B25; 47N20

\section{Introduction}
The  Kadomtsev-Petviashvili hierarchy (KP hierarchy, for short)
is a system of nonlinear differential equations on an infinite number of dependent variables, each of which depend on
infinitely many independent variables. It reads as follows:
\begin{equation} \label{kpintro}
\frac{\partial}{\partial t_n} L = [(L^n)_+ , L] = (L^n)_+ \cdot L - L \cdot (L^n)_+ \; ,
\end{equation} 
in which
\[
L = \frac{\partial}{\partial x} + u_1 \left( \frac{\partial}{\partial x} \right)^{-1} + u_2 \left( \frac{\partial}{\partial x} \right)^{-2}
+ \cdots 
\; ,
\]
$u_1$, $u_2$, $\cdots$ are dependent variables, $(L^n)_+$
indicates the projection of the product $L^n = L \dots L$ on the space of differential operators, and $t_1$, $t_2$, $\cdots$ denote independent variables. An object such as $L$ above is a {\em formal pseudodifferential operator}. It is known that the set of formal pseudodifferential operators can be equipped with an associative algebra structure, see \cite{D}, and therefore (\ref{kpintro})
makes sense, at least, in an algebraic context. The reader is referred to \cite[Chp. 1, 5]{D} for a thorough algebraic discussion of KP and other important hierarchies.

The KP hierarchy  
is related to several soliton equations: for example, it contains the Korteweg-de Vries hierarchy and more generally the Gelfand-Dickey hierarchies, see \cite{D}. Moreover, 
it is {\em universal}. In Mulase's words, ``the KP system is the master equation for the largest possible family of
iso-spectral deformations of arbitrary ordinary differential operators'', see \cite[Section 3]{M2} and \cite{M3}; see also \cite[Corollary 6.2.8]{D} for another expression of this universality.
Solutions to KP can be recovered from quantum field theory and algebraic geometry among other fields, see for instance \cite{Kaz,Mick,M2} and references therein, and it can be posed for instance in contact geometry, see \cite{MR}. 

Can we solve Equation (\ref{kpintro}), in the sense of understanding its associated Cauchy problem? Yes. In the 1980's 
 Mulase published several fundamental papers on the
algebraic structure and formal integrability properties of the KP hierarchy, see \cite{M1,M2,M3}. A common
theme in these papers was the use of a powerful algebraic theorem on the factorization of a group of formal pseudodifferential operators
of infinite order which integrates the algebra of formal
pseudodifferential operators: this factorization ---a delicate
algebraic generalization of the Birkhoff decomposition of loop
groups appearing for example in \cite{PS}--- allowed him to
{\em solve} the Cauchy problem for the KP hierarchy in an
algebraic setting. A review of this theorem is in \cite{ER2013}.  
Mulase's results have been re-interpreted and extended in the context of 
(generalized) differential geometry on diffeological and Fr\"olicher 
spaces, and they have been used to prove well-posedness of the KP 
hierarchy in analytic categories, see \cite{ERMR,Ma2013,MR2016} and our 
recent review \cite{MR2019}. 

It is important to point out that in the above mentioned papers the 
operators under consideration are {\em formal} pseudodifferential 
operators: they are not understood as operators 
acting on smooth maps or smooth sections of vector bundles. 
They differ from non-formal pseudodifferential operators 
by (unknown) smooth kernel operators, the so-called smoothing operators. 
As is well-known, any classical non-formal 
pseudodifferential operator $A$ generates a formal operator (the one 
obtained from the asymptotic expansion of the 
symbol of $A$, see \cite{ARS2,ARS3,Gil}), but there is no canonical way 
to recover a non-formal operator from a formal one.

Can we introduce and discuss a version of the KP hierarchy
using classical non-formal  pseudo-differential operators? Yes. 
{\em The aim of this paper is to show that Equation 
$(\ref{kpintro})$ can indeed be posed and solved on regular 
Fr\"olicher Lie groups built with the help of a particular class of non-
formal pseudo-differential operators. } 
Our first motivation for considering this problem comes from the 
following observation: 
pushing forward equations onto a quotient of a relation of equivalence 
is easy and unambiguous (up to compatibility conditions), while 
pulling-back equations from a quotient space to full space can be often 
performed in very many ways. As explained in the previous paragraph, the 
KP hierarchy can  be understood as posed on a quotient space of 
classical pseudodifferential operators, and so it would be very natural 
to aim at proposing a version of KP using the pseudodifferential 
operators themselves.  
Our second motivation for considering non-formal 
pseudodifferential operators comes from our previous work 
\cite{MR2016}. In this reference we use versions of ``dressing 
operators'' for equation (\ref{kpintro}), and we obtain solutions 
to KP with the help of an operator which acts on initial conditions 
(see \cite[Section 4]{MR2016}). It is natural to wonder if it is 
possible to understand these operators in a non-formal setting. 

{
What class of pseudodifferential operators can we use, in 
order to write down an equation such as (\ref{kpintro})? {\em We 
work with  odd-class classical pseudodifferential operators which act 
on smooth sections of a given trivial (finite rank) vector bundle 
$S^1 \times V$}.  
These pseudodifferential operators were first considered by Kontsevich 
and Vishik in \cite{KV1,KV2} in order to deal with spectral functions 
and renormalized determinants. We use them in two ways:
\begin{itemize}
\item We take them as building blocks for our non-formal KP hierarchy. One reason why odd-class pseudodifferential operators are natural to use
in this context is the fact that differential operators are all
odd-class, and so we can indeed hope to pose Equation (\ref{kpintro}) with their help.
\item We build a central extension of $Diff_+(S^1)$ by a group of bounded odd-class classical pseudodifferential operators, in which $Diff_+(S^1)$ is the group 
of orientation-preserving diffeomorphisms of $S^1$. We present this construction because the structure
of this central extension allows us to prove rather easily a
Mulase-type factorization theorem in our non-formal context, an
observation we think is interesting of its own\footnote{Elements of this central extension are Fourier
integral operators called 
$Diff_+(S^1)-$pseudodifferential operators. To the best
of our knowledge, groups of $Diff(S^1)-$pseudodifferential operators were independently described (with $S^1$ replaced by 
a compact Riemannian manifold $M$) in \cite{Ma2016}, in the context of differential geometry of non-parametrized, 
non-linear grassmannians, and in \cite{Pay2014} as a possible structure group on which Chern-Weil constructions could be performed.}.  
\end{itemize}

\smallskip

We organize our work as follows. Section 1 is this introduction.
Section 2 is a short review on Fr\"olicher Lie groups, mostly inspired by \cite{MR2016,MR2019}.    
In this paper we consider several infinite-dimensional groups built with the help of non-formal pseudodifferential operators. Some of these groups are beyond the reach of traditional analytic means but they do possess Fr\"olicher structures, and so it is natural to begin with a review of the Fr\"olicher setting. Section 3 is on 
Fr\"olicher Lie groups of Fourier integral and pseudodifferential operators, following mainly \cite{Ma2016}. References for the analytic tools used therein are \cite{BGV,Gil,Scott}. 
Then, in Section 4 we propose our version of KP hierarchy: we consider the Lie algebra  $Cl_{h,odd}(S^1,V)$ of formal power series in a parameter $h$ whose coefficients are classical odd-class pseudodifferential
operators satisfying some technical conditions. These conditions allow us to find a {\em regular} (a notion explained in Section 2)
Fr\"olicher Lie group which integrates the Lie algebra $Cl_{h,odd}(S^1,V)$. In this extended context we can pose and solve the Cauchy problem for KP. In Section 4 we also highlight a non-formal operator $U_h \in Cl_{h,odd}(S^1,V)$ which depends on the initial
condition of our KP hierarchy; this operator generates its solutions very much in the spirit of the standard theory of 
$R$-matrices, see \cite{ER2013,MR2016} and references therein.   
Then in Section \ref{S1} we show how to 
recover the operator $U_h$ by analysing the Taylor expansion of functions in the image of the twisted operator
$A : f \in C^\infty(S^1; V ) \mapsto S_0^{-1}(f) \circ g$, in which $g \in Diff_+(S^1)$ and $S_0$ is our version of a
``dressing operator" as considered for example in \cite[Chapter 6]{D}. 
Finally, we include an appendix in which we introduce a group of Fourier
integral operators, the central extension of  
$Diff_+(S^1)$ by the group $Cl^{0,*}_{odd}(S^1,V)$ of all 
odd-class, invertible and bounded, classical pseudodifferential operators. As mentioned above, working with this central extension we can prove a non-formal analogue of the Mulase decomposition of
\cite{M1,M2,M3}.
}

\section{Preliminaries on categories of regular Fr\"olicher Lie groups} \label{regular}
In this section we recall briefly the formal setting which allows us to work rigorously
with (Lie) groups of pseudodifferential operators. No new statements are given here: we follow the expositions appearing in \cite{Ma2013,MR2019,MR2016}.
 We begin with the notion of a diffeological space:

\begin{Definition} Let $X$ be a set.
	
	\noindent $\bullet$ A \textbf{p-parametrization} of dimension $p$ 
	on $X$ is a map from an open subset $O$ of $\R^{p}$ to $X$.
	
	\noindent $\bullet$ A \textbf{diffeology} on $X$ is a set $\p$
	of parametrizations on $X$ such that:
	
	- For each $p\in\N$, any constant map $\R^{p}\rightarrow X$ is in $\p$;
	
	- For each arbitrary set of indexes $I$ and family $\{f_{i}:O_{i}\rightarrow X\}_{i\in I}$
	  of compatible maps that extend to a map $f:\bigcup_{i\in I}O_{i}\rightarrow X$,
	  if $\{f_{i}:O_{i}\rightarrow X\}_{i\in I}\subset\p$, then $f\in\p$.
	
	- For each $f\in\p$, $f : O\subset\R^{p} \rightarrow X$, and $g : O' \subset \R^{q} \rightarrow O$, in which $g$ is  
	a smooth map (in the usual sense) from an open set $O' \subset \R^{q}$ to $O$, we have $f\circ g\in\p$.
	
	\vskip 6pt If $\p$ is a diffeology on $X$, then $(X,\p)$ is
	called a \textbf{diffeological space} and, if $(X,\p)$ and $(X',\p')$ are two diffeological spaces, 
	a map $f:X\rightarrow X'$ is \textbf{smooth} if and only if $f\circ\p\subset\p'$. 
\end{Definition} 

The notion of a diffeological space is due to J.M. Souriau, see \cite{Sou}; see also \cite{Chen} 
for related constructions, and \cite{Igdiff} for a contemporary point of view. Of particular interest to us is the following subcategory of the category of diffeological
spaces. 

\begin{Definition} 
A \textbf{Fr\"olicher} space is a triple $(X,\F,\C)$ such that
	
	- $\C$ is a set of paths $\R\rightarrow X$,
	
	- $\F$ is the set of functions from $X$ to $\R$, such that a function
	$f:X\rightarrow\R$ is in $\F$ if and only if for any
	$c\in\C$, $f\circ c\in C^{\infty}(\R,\R)$;
	
	- A path $c:\R\rightarrow X$ is in $\C$ (i.e. is a \textbf{contour})
	if and only if for any $f\in\F$, $f\circ c\in C^{\infty}(\R,\R)$.
	
	\vskip 5pt If $(X,\F,\C)$ and $(X',\F',\C ')$ are two
	Fr\"olicher spaces, a map $f:X\rightarrow X'$ is \textbf{smooth}
	if and only if $\F'\circ f\circ\C\subset C^{\infty}(\R,\R)$.
\end{Definition}

This definition first appeared in \cite{FK}; we use terminology 
borrowed from Kriegl and Michor's book \cite{KM}.
A short comparison of the notions of diffeological and Fr\"olicher spaces is in \cite{Ma2006-3}; the reader can also 
see \cite{Ma2013,Ma2018-2,MR2016,Wa} for extended expositions.
In particular, it is explained in \cite{MR2016} that 
{\em Fr\"olicher and Gateaux smoothness are the same notion if we 
restrict to a Fr\'echet context.}

Any family of maps $\F_{g}$ from $X$ to $\R$ generates a 
Fr\"olicher structure $(X,\F,\C)$ by setting, after \cite{KM}:

- $\C=\{c:\R\rightarrow X\hbox{ such that }\F_{g}\circ c\subset C^{\infty}(\R,\R)\}$

- $\F=\{f:X\rightarrow\R\hbox{ such that }f\circ\C\subset C^{\infty}(\R,\R)\}.$

We call $\F_g$ a \textbf{generating set of functions}
for the Fr\"olicher structure $(X,\F,\C)$. One easily see that
$\F_{g}\subset\F$. 
A Fr\"olicher space $(X,\F,\C)$
carries a natural topology, the pull-back topology of
$\R$ via $\F$. In the case of a finite dimensional
differentiable manifold $X$ we can take $\F$ as the set of all smooth
maps from $X$ to $\R$, and $\C$ the set of all smooth paths from
$\R$ to $X.$ Then, the underlying topology of the
Fr\"olicher structure is the same as the manifold topology
\cite{KM}. 

We also remark that if $(X,\F, \C)$ is a Fr\"olicher space, we can
define a natural diffeology on $X$ by using the following family
of maps $f$ defined on open domains $D(f)$ of Euclidean spaces, see \cite{Ma2006-3}:
$$
\p_\infty(\F)=
\coprod_{p\in\N}\{\, f: D(f) \rightarrow X; \, \F \circ f \in C^\infty(D(f),\R) \quad \hbox{(in
	the usual sense)}\}.$$

If $X$ is a finite-dimensional differentiable manifold, this diffeology is
called the { \em n\'ebuleuse diffeology}, see \cite{Sou}. Now,
we can easily show the following:

\begin{Proposition} \label{fd} \cite{Ma2006-3}
	Let$(X,\F,\C)$
	and $(X',\F',\C')$ be two Fr\"olicher spaces. A map $f:X\rightarrow X'$
	is smooth in the sense of Fr\"olicher if and only if it is smooth for
	the underlying diffeologies $\p_\infty(\F)$ and $\p_\infty(\F').$
\end{Proposition}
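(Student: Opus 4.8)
The plan is to establish both implications directly from the definitions, the only nontrivial external input being the classical fact (Boman's theorem) that a map $h : D \to \R$ defined on an open subset $D$ of a Euclidean space is smooth in the usual sense if and only if $h \circ \gamma \in C^\infty(\R,\R)$ for every smooth curve $\gamma : \R \to D$. I will use repeatedly the defining property of $\p_\infty(\F)$: a parametrization $p : D(p) \to X$ lies in $\p_\infty(\F)$ exactly when $\phi \circ p \in C^\infty(D(p),\R)$ for every $\phi \in \F$.

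The first, easy, step is to record that $\C \subset \p_\infty(\F)$. Indeed, a contour $c : \R \to X$ is by definition a map with $\F \circ c \subset C^\infty(\R,\R)$, which is precisely the condition defining membership in $\p_\infty(\F)$ for the one-dimensional domain $D(c) = \R$. This inclusion immediately yields the implication ``$f$ diffeologically smooth $\Rightarrow$ $f$ Fr\"olicher smooth'': if $f \circ \p_\infty(\F) \subset \p_\infty(\F')$, then for any $c \in \C \subset \p_\infty(\F)$ the composite $f \circ c$ lies in $\p_\infty(\F')$, so $\psi \circ f \circ c \in C^\infty(\R,\R)$ for every $\psi \in \F'$; this is exactly the Fr\"olicher smoothness condition $\F' \circ f \circ \C \subset C^\infty(\R,\R)$.

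For the converse, suppose $f$ is Fr\"olicher smooth and let $p \in \p_\infty(\F)$ with $D(p) \subset \R^n$ open; I must show $f \circ p \in \p_\infty(\F')$, that is, $\psi \circ f \circ p \in C^\infty(D(p),\R)$ for every $\psi \in \F'$. By Boman's theorem it suffices to verify that $(\psi \circ f \circ p) \circ \gamma \in C^\infty(\R,\R)$ for every smooth curve $\gamma : \R \to D(p)$. The crucial observation is that $p \circ \gamma \in \C$: for any $\phi \in \F$ one has $\phi \circ (p \circ \gamma) = (\phi \circ p) \circ \gamma$, and $\phi \circ p \in C^\infty(D(p),\R)$ because $p \in \p_\infty(\F)$, while $\gamma$ is smooth, so the composite is in $C^\infty(\R,\R)$; hence $\F \circ (p \circ \gamma) \subset C^\infty(\R,\R)$ and $p \circ \gamma$ is a contour. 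Applying Fr\"olicher smoothness of $f$ to this contour and to $\psi$ gives $\psi \circ f \circ (p \circ \gamma) \in C^\infty(\R,\R)$, which is the required statement since $\psi \circ f \circ p \circ \gamma = (\psi \circ f \circ p) \circ \gamma$.

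The main obstacle, and the one genuinely analytic point, is the appeal to Boman's theorem reducing smoothness of a function on an open subset of $\R^n$ to smoothness along all smooth curves; everything else is a formal manipulation of the defining conditions together with the observation that contours are precisely the one-dimensional plots of $\p_\infty(\F)$. I would state Boman's result explicitly, or cite it, before carrying out the argument.
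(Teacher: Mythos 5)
Your proof is correct. The paper itself does not prove Proposition \ref{fd} --- it is quoted from \cite{Ma2006-3} --- so there is no internal proof to compare against; your argument (the contours are exactly the $1$-plots of $\p_\infty(\F)$, which gives the easy direction, and Boman's theorem reduces smoothness of $\psi \circ f \circ p$ on $D(p)$ to smoothness along smooth curves, each of which composes with $p$ to yield a contour) is the standard route to this equivalence and is sound. The only point worth spelling out is that you need Boman's theorem in its open-subset form: a function on an open set $D \subset \R^n$ is smooth if and only if it is smooth along every smooth curve $\R \rightarrow D$; this follows from the usual $\R^n$ statement by locality (compose with a diffeomorphism of $\R^n$ onto a ball contained in $D$), and citing it explicitly, as you propose, closes the argument.
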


Thus, Proposition \ref{fd} and the foregoing remarks imply that 
the following implications hold:
\vskip 12pt

\begin{tabular}{ccccc}
	smooth manifold  & $\Rightarrow$  & Fr\"olicher space  & $\Rightarrow$  & diffeological space
\end{tabular}
{
\vskip 12pt These implications can be refined. The reader is
referred to the Ph.D. thesis \cite{Wa} for a deeper analysis of 
them. 

\begin{remark}
The set of contours $\C$ of the Fr\"olicher space $(X,\F,\C)$ does 
not give us a diffeology, because a diffeology needs to be stable 
under restriction of domains. In the case of paths in $\C$ the 
domain is always $\R$ whereas the domain of 1-plots can (and has 
to) be any interval of $\R.$ However, $\C$ defines a ``minimal 
diffeology'' $\p_1(\F)$ whose plots are smooth parametrizations 
which are locally of the type $c \circ g,$ in which 
$g \in \p_\infty(\R)$  and $c \in \C.$ Within this setting, we can 
replace $\p_\infty$ by $\p_1$ in Proposition $\ref{fd}$. The main
technical tool needed to discuss this issue is Boman's theorem 
\cite[p.26]{KM}.
Related discussions are in \cite{Ma2006-3,Wa}.
\end{remark}

\begin{Proposition} \label{prod1} Let $(X,\p)$ and $(X',\p')$
	be two diffeological spaces. There exists a diffeology 
	$\p\times\p'$ on
	$X\times X'$  made of plots $g:O\rightarrow X\times X'$
	that decompose as $g=f\times f'$, where $f:O\rightarrow X\in\p$
	and $f':O\rightarrow X'\in\p'$. We call it the \textbf{product diffeology}, 
	and  this construction extends to an infinite product.
\end{Proposition}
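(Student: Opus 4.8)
The plan is to reinterpret the proposed family $\p\times\p'$ through the two canonical projections $\pi:X\times X'\to X$ and $\pi':X\times X'\to X'$. A parametrization $g:O\to X\times X'$ decomposes as $g=f\times f'$ precisely when $f=\pi\circ g$ and $f'=\pi'\circ g$, so the decomposition itself is automatic and the only real content of the definition is the requirement that $\pi\circ g\in\p$ and $\pi'\circ g\in\p'$. I would therefore work with
$$\p\times\p' = \{\, g:O\to X\times X' \mid \pi\circ g\in\p \text{ and } \pi'\circ g\in\p' \,\}$$
and verify the three diffeology axioms of the first Definition. The guiding observation is that each projection commutes with pre-composition, so every axiom for $\p\times\p'$ reduces, factor by factor, to the corresponding axiom already known to hold for $\p$ and for $\p'$.

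For the constant axiom, a constant map $c:\R^p\to X\times X'$ with value $(x_0,x_0')$ projects to the constant maps $\pi\circ c$ and $\pi'\circ c$, which lie in $\p$ and $\p'$ respectively; hence $c\in\p\times\p'$. For the covering axiom, given a compatible family $\{g_i:O_i\to X\times X'\}_{i\in I}\subset\p\times\p'$ that glues to $g:\bigcup_{i\in I}O_i\to X\times X'$, I note that $\{\pi\circ g_i\}_{i\in I}$ is a compatible family in $\p$ gluing to $\pi\circ g$, so $\pi\circ g\in\p$ by the covering axiom for $\p$, and symmetrically $\pi'\circ g\in\p'$; thus $g\in\p\times\p'$. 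For the smooth-compatibility axiom, if $g\in\p\times\p'$ and $h:O'\to O$ is an ordinary smooth map, then $\pi\circ(g\circ h)=(\pi\circ g)\circ h\in\p$ and likewise $\pi'\circ(g\circ h)\in\p'$, whence $g\circ h\in\p\times\p'$. This establishes that $\p\times\p'$ is a diffeology on $X\times X'$.

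For the infinite product $\prod_{i\in I}X_i$ I would proceed identically, declaring a parametrization $g$ to be a plot exactly when $\pi_i\circ g\in\p_i$ for every index $i\in I$, where $\pi_i$ denotes the $i$-th projection. The three axioms are again checked one factor at a time, since composition commutes with each $\pi_i$; the only point deserving a word of care is that the covering and smooth-compatibility conditions must hold simultaneously for all $i\in I$, but because the factors enter independently this poses no real difficulty. I expect no serious obstacle: the single mild subtlety is simply confirming that the defining condition is genuinely equivalent to a componentwise condition on the projections, after which every axiom transfers mechanically from the factors to the product.
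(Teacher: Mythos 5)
Your proof is correct. The paper states this proposition without any proof, as a classical fact quoted from the diffeology literature (Souriau, Iglesias-Zemmour), so there is no argument to compare against; your verification --- observing that the decomposition $g=f\times f'$ is automatic via $f=\pi\circ g$, $f'=\pi'\circ g$, so that membership in $\p\times\p'$ is a componentwise condition, and then checking the three axioms factor by factor --- is exactly the standard argument, and it extends verbatim to infinite products as you indicate.
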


We apply this result to the case of Fr\"olicher spaces and we 
derive (compare with e.g. \cite{KM}) the following:

\begin{Proposition} \label{prod2} 
Let $(X,\F,\C)$ and $(X',\F',\C')$ be two Fr\"olicher spaces 
equipped with their natural diffeologies $\p$ and $\p'$ . There is 
a natural structure of Fr\"olicher space on $X\times X'$ which 
contours $\C\times\C'$ are the 1-plots of $\p\times\p'$. 
\end{Proposition}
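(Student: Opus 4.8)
The plan is to construct the Frölicher structure on $X\times X'$ from an explicit generating family of real-valued functions and then to check that the two descriptions of its one-dimensional smooth objects --- the contours produced by the generation procedure and the one-plots of the product diffeology $\p\times\p'$ --- coincide. Concretely, let $\mathrm{pr}_1:X\times X'\to X$ and $\mathrm{pr}_2:X\times X'\to X'$ be the canonical projections and set
$$\F_g=\{\,f\circ\mathrm{pr}_1:f\in\F\,\}\cup\{\,f'\circ\mathrm{pr}_2:f'\in\F'\,\}.$$
By the generation procedure recalled just before Proposition \ref{fd}, the family $\F_g$ generates a Frölicher structure $(X\times X',\F_{X\times X'},\C_{X\times X'})$, in which $\C_{X\times X'}$ consists of the paths $c:\R\to X\times X'$ with $\F_g\circ c\subset C^\infty(\R,\R)$ and $\F_{X\times X'}$ is recovered from $\C_{X\times X'}$ in the usual way. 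This is our candidate natural Frölicher structure on the product.

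First I would identify the contours. Writing a path as $c=(c_1,c_2)$ with $c_1:\R\to X$ and $c_2:\R\to X'$, the condition $\F_g\circ c\subset C^\infty(\R,\R)$ splits, because the two families making up $\F_g$ each see only one factor: testing against $f\circ\mathrm{pr}_1$ for all $f\in\F$ gives $f\circ c_1\in C^\infty(\R,\R)$, i.e. $c_1\in\C$ by the defining property of the Frölicher space $(X,\F,\C)$, and testing against $f'\circ\mathrm{pr}_2$ gives $c_2\in\C'$ in the same way. Hence $\C_{X\times X'}=\{(c_1,c_2):c_1\in\C,\ c_2\in\C'\}$, which is exactly $\C\times\C'$.

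It then remains to match $\C\times\C'$ with the one-plots of $\p\times\p'$. By Proposition \ref{prod1} every plot of $\p\times\p'$ decomposes as $f\times f'$ with $f\in\p$ and $f'\in\p'$; specializing to a one-dimensional domain, and in particular to the global domain $\R$, such a one-plot is a pair $(f,f')$ with $f:\R\to X$ in $\p=\p_\infty(\F)$ and $f':\R\to X'$ in $\p'=\p_\infty(\F')$. But a map $\R\to X$ lies in $\p_\infty(\F)$ exactly when $\F\circ f\in C^\infty(\R,\R)$, which is the definition of $f\in\C$, and similarly for $f'$; thus the global one-plots of $\p\times\p'$ are precisely the pairs in $\C\times\C'$, as claimed. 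The one point needing care --- more bookkeeping than genuine obstacle, since Proposition \ref{prod1} already supplies the structural content --- is the passage between plots defined on arbitrary open subsets of $\R$ and contours defined on all of $\R$: this is handled by the locality (sheaf) axiom in the definition of a diffeology together with Proposition \ref{fd}, which together ensure that a one-plot is nothing more than a map that is locally a contour, so that no smooth paths are gained or lost in identifying $\C\times\C'$ with the one-plots of the product diffeology.
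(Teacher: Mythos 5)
Your proof is correct, and it follows the route the paper intends (the paper omits the argument, noting only that the proposition is derived "very easily" from Proposition \ref{prod1}): you generate the product Fr\"olicher structure from the pulled-back functions $f\circ\mathrm{pr}_1$, $f'\circ\mathrm{pr}_2$, identify its contours as $\C\times\C'$, and match these with the one-plots of $\p\times\p'$ via the definition of $\p_\infty(\F)$. Your closing remark on reconciling plots on open subsets of $\R$ with globally defined contours is exactly the right point of care, and your treatment of it is adequate.
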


We can also state the above result for infinite products;
we simply take Cartesian products of the plots, or of the contours.

\smallskip

Now we consider quotients after \cite{Sou} and 
\cite[p. 27]{Igdiff}: Let $(X,\p)$ be 
a diffeological space, and let $X'$ be a set. Let 
$f:X\rightarrow X'$ be a map. We define the \textbf{push-forward 
diffeology} as the coarsest (i.e. the smallest for inclusion) among 
the diffologies on $X'$, which contains $f \circ \p.$ 

\begin{Proposition} \label{quotient} 
Let $(X,\p)$ b a diffeological space and $\rel$ an equivalence 
relation on $X$. Then, there is a natural diffeology on $X/\rel$, 
noted by $\p/\rel$, defined as the push-forward diffeology on 
$X/\rel$ by the quotient projection $X\rightarrow X/\rel$. 
\end{Proposition}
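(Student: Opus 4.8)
The plan is to prove the statement as a direct instance of the general push-forward construction recalled immediately above, so that the only thing actually requiring verification is that this push-forward is well-defined for the quotient projection $\pi : X \to X/\rel$. Since the preceding paragraph \emph{defines} the push-forward as the coarsest diffeology containing $\pi\circ\p$, the real content of the proposition is the existence of a smallest such diffeology.

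First I would establish the auxiliary fact that an arbitrary intersection of diffeologies on a fixed set is again a diffeology. Given a family $\{\p_i\}_{i\in I}$ of diffeologies on a set $Y$, set $\p^\ast = \bigcap_{i\in I}\p_i$ and check the three axioms in the definition of a diffeology: constant maps lie in every $\p_i$ and hence in $\p^\ast$; if a compatible family of plots glues to a map $f$, then $f$ lies in each $\p_i$ (each being a diffeology) and therefore in $\p^\ast$; and stability under smooth reparametrization is inherited in the same pointwise manner. Each axiom is universally quantified over the family, so it passes to the intersection without difficulty.

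Next I would show that the collection of diffeologies on $X/\rel$ containing $\pi\circ\p$ is non-empty, by exhibiting the coarse diffeology consisting of \emph{all} parametrizations $O\subset\R^p\to X/\rel$. This set trivially satisfies the three axioms and a fortiori contains $\pi\circ\p$. Consequently the intersection $\p/\rel$ of all diffeologies containing $\pi\circ\p$ is, by the previous step, itself a diffeology; it contains $\pi\circ\p$ because every member of the family does; and it is the smallest such by construction. This is exactly the push-forward diffeology of $\p$ by $\pi$, which proves the claim. One may moreover record the concrete description that a parametrization $P$ is a plot of $\p/\rel$ precisely when it locally factors as $\pi\circ Q$ for plots $Q$ of $\p$, which is convenient in later computations.

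The one point deserving care, and the reason the statement is not literally trivial, is that the family $\pi\circ\p = \{\pi\circ p : p\in\p\}$ is \emph{not} itself a diffeology in general: it need not be stable under the gluing axiom, and the surjectivity of $\pi$ only guarantees that the constant maps are reached. Thus one cannot take $\p/\rel$ to be $\pi\circ\p$ directly; the generated (smallest containing) diffeology is required, and the existence of this generated object is precisely what the intersection lemma of the first step supplies.
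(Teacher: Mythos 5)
Your proof is correct. Note that the paper itself gives no proof of this proposition: it is quoted as a classical fact (after Souriau and Iglesias-Zemmour) immediately following the definition of the push-forward diffeology as the coarsest diffeology containing $\pi\circ\p$, so the only mathematical content to supply is exactly what you supply, namely that such a coarsest diffeology exists. Your argument --- arbitrary intersections of diffeologies are diffeologies (each axiom being checked pointwise in the family), the collection of diffeologies containing $\pi\circ\p$ is non-empty because the coarse diffeology of all parametrizations belongs to it, hence the intersection is the required smallest element --- is the standard generated-diffeology argument and is complete. Your closing remark that $\pi\circ\p$ itself generally fails the gluing axiom (local lifts along $\pi$ need not be compatible in $X$ even when their projections agree) is also correct, and it is precisely the reason the proposition is not a tautology. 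The one assertion you state without proof is the concrete characterization of $\p/\rel$ as those parametrizations that locally factor as $\pi\circ Q$ with $Q\in\p$ (surjectivity of $\pi$ absorbing the constant plots); this is true but needs its own short verification --- one checks that the locally-factoring parametrizations form a diffeology containing $\pi\circ\p$ and are contained in every such diffeology --- and since you present it only as a convenience for later use rather than as a step of the proof, it does not affect correctness.
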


Given a subset $X_{0}\subset X$, where $X$ is a Fr\"olicher space
or a diffeological space, we equip $X_{0}$ with structures 
induced by $X$ as follows:
\begin{enumerate}
\item If $X$ is equipped with a diffeology $\p$, we define
a diffeology $\p_{0}$ on $X_{0}$ called the \textbf{subset or trace 
	diffeology}, see \cite{Sou,Igdiff}, by setting 
\[ 
\p_{0}=
\lbrace p\in\p \hbox{ such that the image of }p\hbox{ is a subset 
of } X_{0}\rbrace\; .
\]
\item If $(X,\F,\C)$ is a Fr\"olicher space, we take as a 
generating set of maps $\F_{g}$ on $X_{0}$ the restrictions of the 
maps $f\in\F$. In this case, the contours (resp. the induced 
diffeology) on $X_{0}$ are the contours (resp. the plots) on $X$ 
whose images are a subset of $X_{0}$.
\end{enumerate}

\smallskip

Our last general construction is the so-called functional 
diffeology. 
{
	Its existence implies the following crucial fact: the category of 
diffeological spaces is Cartesian 
closed, something which is certainly not true in the category of
smooth manifolds.} Our discussion follows \cite{Igdiff}. 
Let $(X,\p)$ and $(X',\p')$ be diffeological spaces. 
Let $M \subset C^\infty(X,X')$ be a set of smooth maps. 
The \textbf{functional diffeology} on $S$ is the diffeology $\p_S$
made of plots 
$$ \rho : D(\rho) \subset \R^k \rightarrow S$$
such that, for each $p \in \p$, the maps 
$\Phi_{\rho, p}: (x,y) \in D(p)\times D(\rho) \mapsto 
\rho(y)(x) \in X'$ are plots of $\p'.$ 
We have, see \cite[Paragraph 1.60]{Igdiff}:

\begin{Proposition} \label{cvar} 
Let $X,Y,Z$ be diffeological spaces. Then, 
$$
C^\infty(X\times Y,Z) = C^\infty(X,C^\infty(Y,Z)) = 
C^\infty(Y,C^\infty(X,Z))
$$
as diffeological spaces equipped with functional diffeologies.
\end{Proposition}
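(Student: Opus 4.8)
The plan is to prove the exponential law (cartesian closedness) for diffeological spaces by establishing a set-theoretic bijection between the three function spaces and then verifying that this bijection is a diffeomorphism, i.e. that the functional diffeologies correspond under the currying/uncurrying maps. First I would fix the standard currying map $\Lambda : C^\infty(X\times Y, Z) \to \mathrm{Map}(X, \mathrm{Map}(Y,Z))$ defined by $\Lambda(F)(x)(y) = F(x,y)$, together with its inverse (uncurrying). The work splits into two independent verifications, which by symmetry of the roles of $X$ and $Y$ reduce to showing $C^\infty(X\times Y,Z) = C^\infty(X,C^\infty(Y,Z))$; the second equality then follows by the same argument after swapping factors.

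The core of the argument proceeds in two directions. First I would show that if $F \in C^\infty(X\times Y,Z)$, then $\Lambda(F)$ lands in $C^\infty(X,C^\infty(Y,Z))$ and is itself smooth. For this I would take an arbitrary plot $p : O \to X$ of $\p_X$ and check that $\Lambda(F)\circ p$ is a plot of the functional diffeology $\p_{C^\infty(Y,Z)}$; by the definition of functional diffeology this amounts to verifying that for every plot $q : D(q) \to Y$ of $\p_Y$, the map $(u,v) \mapsto \Lambda(F)(p(u))(q(v)) = F(p(u), q(v))$ is a plot of $\p_Z$. But $(u,v) \mapsto (p(u), q(v))$ is a plot of the product diffeology $\p_X \times \p_Y$ by Proposition \ref{prod1}, and composing it with the smooth map $F$ yields a plot of $\p_Z$ precisely because $F$ is smooth. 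The preliminary point that each $\Lambda(F)(x)$ is individually smooth is the special case obtained by taking $p$ constant.

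Conversely, I would start from $G \in C^\infty(X,C^\infty(Y,Z))$ and show its uncurrying $\tilde G(x,y) = G(x)(y)$ lies in $C^\infty(X\times Y,Z)$. Here I would take an arbitrary plot $r : O \to X\times Y$ of the product diffeology; by Proposition \ref{prod1} it decomposes (at least locally, which is all that is needed since smoothness is a local condition on plots) as $r = p \times q$ with $p \in \p_X$ and $q \in \p_Y$, and I must check that $\tilde G \circ r : w \mapsto G(p(w))(q(w))$ is a plot of $\p_Z$. Since $G$ is smooth, $G\circ p$ is a plot of the functional diffeology on $C^\infty(Y,Z)$, so by definition the map $(w,v)\mapsto G(p(w))(q(v))$ is a plot of $\p_Z$; restricting to the diagonal $v = w$ (that is, precomposing with the smooth map $w \mapsto (w,w)$) and using the third axiom of a diffeology gives that $w \mapsto G(p(w))(q(w))$ is a plot of $\p_Z$, as required. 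Finally I would note that $\Lambda$ and uncurrying are mutually inverse set maps carrying plots to plots in both directions, hence the functional diffeologies on the two sides coincide and the spaces are equal as diffeological spaces.

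The main obstacle I anticipate is the bookkeeping around the product diffeology in the converse direction: the decomposition $r = p\times q$ in Proposition \ref{prod1} is a statement about the \emph{generating} plots of the product diffeology, so some care is needed to argue that checking smoothness on such product plots suffices, and the restriction-to-the-diagonal step must be handled cleanly using only the stated diffeology axioms rather than any manifold structure. Everything else is a direct unwinding of the definition of functional diffeology together with Proposition \ref{prod1}.
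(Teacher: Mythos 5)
Your proof is correct, and it is worth noting that the paper itself contains no proof of Proposition \ref{cvar}: the result is stated as a classical property with a pointer to Iglesias-Zemmour \cite{Igdiff}, so your argument supplies exactly the standard currying/uncurrying proof that the paper delegates to the literature. Two remarks on the details. First, the obstacle you anticipate concerning the decomposition $r = p \times q$ is not actually present under the paper's conventions: in Proposition \ref{prod1} a plot of the product diffeology is, by definition, a parametrization $g : O \rightarrow X \times X'$ both of whose components are plots on the \emph{common} domain $O$, so the decomposition is global, no passage to a generated diffeology is needed, and your diagonal-restriction step (precomposition with $w \mapsto (w,w)$, using the smooth-compatibility axiom) goes through verbatim. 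Second, the one place where you are too brief is the closing sentence: your detailed verifications establish the bijection at the level of \emph{elements} ($F$ smooth iff $\Lambda(F)$ smooth), whereas the claim ``as diffeological spaces equipped with functional diffeologies'' requires checking that $\Lambda$ and its inverse send \emph{plots} of one functional diffeology to plots of the other, which you only assert. This is not a gap in substance, because the check is the same computation with one extra parameter: for a plot $\rho : U \rightarrow C^\infty(X\times Y,Z)$ and plots $p$ of $X$, $q$ of $Y$, the map $(o_1,o_2,u) \mapsto \rho(u)\left(p(o_1),q(o_2)\right)$ is a plot of $Z$ since $(o_1,o_2) \mapsto \left(p(o_1),q(o_2)\right)$ is a product plot, and the converse direction again restricts to a diagonal exactly as in your second step; but this verification should be written out, since it is the actual content of the equality of diffeological spaces rather than of underlying sets.
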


\smallskip
\noindent
{
Now, given an algebraic structure, we can define a
corresponding compatible diffeological (Fr\"olicher) structure, see 
for instance \cite{Les}. For example, see \cite[pp. 66-68]{Igdiff},
if $\R$ is equipped with its canonical diffeology (Fr\"olicher
structure), we say that an $\R-$vector space equipped with a 
diffeology (Fr\"olicher structure) is called a
diffeological (Fr\"olicher) vector space if addition and scalar 
multiplication are smooth. We state:
}

\begin{Definition}
Let $G$ be a group equipped with a diffeology (Fr\"olicher 
structure). We call it a \textbf{diffeological (Fr\"olicher) group} 
if both multiplication and inversion are smooth.
\end{Definition}

\noindent
Since we are interested in infinite-dimensional analogues of Lie 
groups, we need to consider tangent spaces of 
diffeological spaces, and we have to deal with Lie algebras and 
exponential maps. We state, after \cite{DN2007-1} and \cite{CW}the 
following definition:

\begin{Definition}
\begin{enumerate}
\item[(i)] For each $x\in X,$ we consider 
$$
C_{x}=\{c \in C^\infty(\R,X)| c(0) = x\}
$$ 
and we take the equivalence relation $\mathcal{R}$ given by 
$$
c\mathcal{R}c' \Leftrightarrow \forall f \in C^\infty(X,\R), \partial_t(f \circ c)|_{t = 0} = \partial_t(f \circ c')|_{t = 0}.
$$ 
Equivalence classes of $\mathcal{R}$ are called {\bf germs} and are 
denoted by $\partial_t c(0)$ or $\partial_tc(t)|_{t=0}$.  
The {\bf internal tangent cone} at $x$ is the quotient 
$^iT_xX = C_x / \mathcal{R}.$ If 
$X = \partial_tc(t)|_{t=0} \in {}^iT_X, $ we define the derivation 
$Df(X) = \partial_t(f \circ c)|_{t = 0}\, .$
\item[(ii)] The \textbf{internal tangent space} at $x \in X$ is the 
vector space generated by the internal tangent cone.
\end{enumerate}
\end{Definition}

The reader may compare this definition to the one appearing in 
\cite{KM} for manifolds in the ``convenient"  $c^\infty-$setting. 
The internal tangent cone at a point $x$ is not a vector space in 
many examples; 
this motivates item (ii) above, see \cite{CW,DN2007-1}. 
Fortunately, the internal tangent cone at $x\in X$ {\em is} a vector 
space for the objects under consideration in this work, see Proposition 
\ref{leslie} below; it will be called, simply, the tangent space at $x 
\in X$.

Following Iglesias-Zemmour, see \cite{Igdiff}, we do not assert that 
arbitrary diffeological groups have associated Lie algebras; however, 
the following holds, see \cite[Proposition 1.6.]{Les} and 
\cite[Proposition 2.20]{MR2016}.

\begin{Proposition} \label{leslie}
Let $G$ be a diffeological group. Then the tangent cone at the neutral 
element $T_eG$ is a diffeological vector space.
\end{Proposition}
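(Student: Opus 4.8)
The plan is to produce, directly from the group operations, the vector-space operations on the set of germs $T_eG = {}^iT_eG = C_e/\rel$, and then to verify that these operations are smooth for the quotient diffeology inherited from $C_e$. Throughout I identify a germ $v = \partial_t c(0)$ with the functional $L_v : f \in C^\infty(G,\R) \mapsto Df(v) = \partial_t(f\circ c)|_{t=0}$; by the very definition of $\rel$ the assignment $v \mapsto L_v$ is injective, so it suffices to realise the candidate operations as genuine germs and to check that $L$ carries them to the pointwise addition and scalar multiplication of functionals.

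First I would record the operations. Scalar multiplication by $\lambda \geq 0$ is given by reparametrization, $\lambda \cdot \partial_t c(0) := \partial_t c(\lambda t)|_{t=0}$, which already exists for the tangent cone of an arbitrary diffeological space and satisfies $L_{\lambda v} = \lambda L_v$ by the one-variable chain rule. Addition I would define using multiplication $m : G\times G \to G$: given $v = \partial_t c(0)$ and $w = \partial_t d(0)$, set $v + w := \partial_t\bigl(c(t)d(t)\bigr)|_{t=0}$, which is a germ because $t \mapsto c(t)d(t)$ is a smooth path through $e = e\cdot e$. Negation I would obtain from inversion $\iota : G \to G$, setting $-v := \partial_t\bigl(c(t)^{-1}\bigr)|_{t=0}$, again a germ since $t\mapsto c(t)^{-1}$ is smooth and starts at $e$.

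The heart of the argument is a Leibniz rule valid in this setting. For a fixed $f \in C^\infty(G,\R)$ the two-variable map $(s,t)\mapsto f\bigl(c(s)d(t)\bigr)$ is a smooth function $\R^2 \to \R$, because $(s,t)\mapsto c(s)d(t)$ is a plot of $G$ (smoothness of $m$ and of the paths) and $f$ is smooth. Composing with the diagonal $t\mapsto (t,t)$ and applying the ordinary chain rule in $\R^2$, and using $c(0)=d(0)=e$, gives
\[
  Df(v+w) = \partial_t f\bigl(c(t)d(t)\bigr)|_{t=0}
          = \partial_s f\bigl(c(s)\bigr)|_{s=0} + \partial_t f\bigl(d(t)\bigr)|_{t=0}
          = Df(v) + Df(w).
\]
Thus $L_{v+w}=L_v+L_w$; in particular $v+w$ depends only on $v$ and $w$ (well-definedness), and the operation is commutative and associative because addition of functionals is. Applying the same identity to the constant path $t \mapsto c(t)c(t)^{-1}\equiv e$, whose germ is $0$, yields $0 = Df(v) + Df(-v)$, so $L_{-v} = -L_v$; combining with the reparametrization gives $L_{\lambda v} = \lambda L_v$ for every $\lambda \in \R$. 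Since $L$ is injective and its image is closed under addition and scalar multiplication of functionals, that image is a linear subspace, and so $T_eG$ inherits all the vector-space axioms and is a real vector space.

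Finally I would upgrade this to a diffeological vector space. I equip $C_e \subset C^\infty(\R,G)$ with the subset functional diffeology and $T_eG$ with the quotient diffeology $\p/\rel$ of Proposition \ref{quotient}. To see that addition is smooth, take a plot of $T_eG \times T_eG$; by the quotient and product constructions it lifts locally to a pair of plots $s\mapsto c_s$, $s\mapsto d_s$ of $C_e$, whence $(s,t)\mapsto c_s(t)d_s(t)$ is smooth by smoothness of $m$, so $s\mapsto [c_s d_s]$ is a plot of $T_eG$. Smoothness of scalar multiplication follows the same way, using the reparametrization together with smoothness of $\iota$ for negative scalars. The step I expect to be delicate is precisely this plot-lifting through the quotient projection: one must use that $C_e \to T_eG$ is a subduction, so that plots of the quotient lift locally, and then that the group operations turn these lifts into admissible plots. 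The conceptual core, however, is the chain-rule computation above, which is exactly where the group structure rescues the tangent cone from being a mere cone.
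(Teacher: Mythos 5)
Your proof is correct, and it is essentially the argument the paper relies on: the paper itself defers this statement to \cite[Proposition 1.6]{Les} (and \cite{MR2016}), and the operations you construct --- addition via $\partial_t c(0)+\partial_t d(0)=\partial_t(c\,d)(0)$ using the group multiplication, well-definedness through the injective functionals $v\mapsto Df(v)$, and smoothness with respect to the quotient of the functional diffeology on $C_e$ --- are precisely the ones the paper records immediately after the Proposition for Fr\"olicher Lie groups. The only simplification worth noting is that the reparametrization $t\mapsto c(\lambda t)$ already realizes every scalar $\lambda\in\R$, negative values included, so your case split and the appeal to inversion for $\lambda<0$ are unnecessary (though harmless).
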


\noindent
{
The proof of Proposition \ref{leslie}
appearing in \cite{MR2016} uses explicitly the diffeologies
$\mathcal{P}_1$ and  $\mathcal{P}_\infty$ which appear in 
Proposition 3 and Remark 4 of this work.
}

\begin{Definition}
The diffeological group $G$ is a \textbf{diffeological Lie group} 
if and only if 
the Adjoint action of $G$ on the diffeological vector space 
$^iT_eG$ defines a Lie bracket. 
In this case, we call $^iT_eG$ the Lie algebra of $G$ and we denote 
it by $\mathfrak{g}.$
\end{Definition}

Let us concentrate on Fr\"olicher Lie groups, following  
\cite{Ma2013} and \cite{Les}. If $G$ is a Fr\"olicher Lie group 
then, after (i) and (ii) above we have that: 
	$$
\mathfrak{g} = \{ \partial_t c(0) ; c \in \C \hbox{ and } c(0)=e_G \}
	$$
	is the space of germs of paths at $e_G.$ Moreover:
	\begin{itemize}
		\item Let $(X,Y) \in \mathfrak{g}^2,$ $X+Y = \partial_t(c.d)(0)$  
		where $c,d \in \C ^2,$ $c(0) = d(0) =e_G ,$
		$X = \partial_t c(0)$ and $Y = \partial_t d(0).$
		\item Let $(X,g) \in \mathfrak{g}\times G,$ $Ad_g(X) = \partial_t(g c g^{-1})(0)$  where $c \in \C ,$ $c(0) =e_G ,$
		and $X = \partial_t c(0).$
		\item Let $(X,Y) \in \mathfrak{g}^2,$ $[X,Y] = \partial_t( Ad_{c(t)}Y)$   where $c \in \C ,$ $c(0) =e_G ,$
		$X = \partial_t c(0).$
	\end{itemize}
	All these operations are smooth and thus well-defined as operations on Fr\"olicher spaces, see 
	\cite{Les,Ma2013,Ma2018-2,MR2016}.

The basic properties of adjoint, coadjoint actions, and of Lie brackets, remain globally the same
as in the case of finite-dimensional Lie groups, and the proofs are similar: see \cite{Les} 
and \cite{DN2007-1} for details. 

\begin{Definition} \label{reg1} \cite{Les} 
A Fr\"olicher Lie group $G$ with Lie algebra $\mathfrak{g}$
	is called \textbf{regular} if and only if there is a smooth map 
	\[
Exp:C^{\infty}([0;1],\mathfrak{g})\rightarrow C^{\infty}([0,1],G)
	\]
	such that $g(t)=Exp(v(t))$ is the unique solution
	of the differential equation \begin{equation}
	\label{loga}
	\left\{ \begin{array}{l}
	g(0)=e\\
	\frac{dg(t)}{dt}g(t)^{-1}=v(t)\end{array}\right.\end{equation}
	We define the exponential function as follows:
	\begin{eqnarray*}
		exp:\mathfrak{g} & \rightarrow & G\\
		v & \mapsto & exp(v)=g(1) \; ,
	\end{eqnarray*}
	where $g$ is the image by $Exp$ of the constant path $v.$ 
\end{Definition}

When the Lie group $G$ is a vector space $V$, the notion of regular Lie group specialize to what is called 
{\em regular vector space} in \cite{Ma2013} and {\em integral vector space} in \cite{Les}; we follow the latter 
terminology.

\begin{Definition} \label{reg2} \cite{Les}
	Let $(V,\p)$ be a Fr\"olicher vector space. 
The space $(V,\p)$ is \textbf{integral} if there is a smooth map
$$ 
\int_0^{(.)} : C^\infty([0;1];V) \rightarrow C^\infty([0;1],V)
$$ 
such that $\int_0^{(.)}v = u$ if and only if $u$ is the unique solution of the differential equation
	\[
	\left\{ \begin{array}{l}
	u(0)=0\\
	u'(t)=v(t)\end{array}\right. .\]
\end{Definition}

This definition applies, for instance, if $V$ is a complete locally convex topological vector space equipped with its 
natural Fr\"olicher structure given by the Fr\"olicher completion of its n\'ebuleuse diffeology, see
\cite{Igdiff,Ma2006-3,Ma2013}.

\begin{Definition}
Let $G$ be a Fr\"olicher Lie group with Lie algebra $\mathfrak{g}$. 
Then, $G$ is regular with integral Lie algebra if  
$\mathfrak{g}$ is integral and $G$ is regular in the sense of 
Definitions $\ref{reg1}$ and $\ref{reg2}$.
\end{Definition}

We finish this section with two structural results proven in 
\cite{Ma2013}. The first one provides us with an example of a 
Fr\"olicher Lie group (instances of which appear prominently in the 
analysis of the Cauchy problem for the Kadomtsev-Petviashvili 
carried out in \cite{Ma2013,MR2016}),
while the second one is used in the construction of regular Lie 
groups of non-formal pseudodifferential and Fourier operators, see
\cite{Ma2013,Ma2016} and Section 3 below.

\begin{Theorem} \label{regulardeformation} 
Let $(A_n)_{n \in \mathbb{N}^*} $ be a sequence of integral 
(Fr\"olicher) vector spaces equipped with a graded smooth 
multiplication operation on $\bigoplus_{n \in \mathbb{N}^*} A_n ,$ 
i.e. a multiplication such that for each $n,m \in \mathbb{N}^*$, 
$A_n .A_m \subset A_{n+m}$ is smooth with respect to the 
corresponding Fr\"olicher structures.  
Let us define the (non unital) algebra of formal series:
$$
\mathcal{A}= \left\{ \sum_{n \in \mathbb{N}^*} a_n | 
\forall n \in \mathbb{N}^* , a_n \in A_n \right\}\; ,
$$
equipped with the Fr\"olicher structure of the infinite product.
Then, the space 
	$$1 + \mathcal{A} = 
\left\{ 1 + \sum_{n \in \mathbb{N}^*} a_n | 
\forall n \in \mathbb{N}^* , a_n \in A_n \right\} $$
	is a regular Fr\"olicher Lie group with 
  	integral Fr\"olicher Lie algebra $\mathcal{A}.$
Moreover, the exponential map defines a smooth bijection 
$\mathcal{A} \rightarrow 1+\mathcal{A}.$
\end{Theorem}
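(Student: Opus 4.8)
I would prove this by constructing the exponential map explicitly as the solution of the logarithmic ODE and then verifying the three claimed structural properties—group structure, regularity, and bijectivity of $\exp$—by working degree by degree in the grading $\A = \bigoplus_{n\geq 1} A_n$. The starting observation is that because the multiplication is graded ($A_n . A_m \subset A_{n+m}$), any element $a = \sum_{n\geq 1} a_n \in \A$ is \emph{topologically nilpotent} in a precise sense: its $N$-th power $a^N$ has components only in degrees $\geq N$, so all the relevant series (the geometric series for inversion, the exponential series, the logarithm series) have only \emph{finitely many terms contributing to each fixed degree}. This is the mechanism that makes everything converge in the infinite-product Fr\"olicher structure without any analytic estimates.

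First I would check that $1 + \A$ is a group. For $g = 1 + a$ with $a \in \A$, the inverse is given formally by $g^{-1} = \sum_{k\geq 0} (-a)^k = 1 + \sum_{k\geq 1}(-a)^k$; since the degree-$n$ component of this sum receives contributions only from $k \leq n$, it is a finite sum in each degree and so defines an element of $1 + \A$. Smoothness of multiplication and inversion follows from smoothness of each graded product $A_n . A_m \to A_{n+m}$ together with Proposition~\ref{prod2} (and its infinite-product version), since in each fixed total degree only finitely many bilinear operations are involved. This identifies $1+\A$ as a Fr\"olicher Lie group whose tangent space at $1$ is $\A$ itself, with the bracket inherited from the (commutator of the) graded multiplication.

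Next, for regularity, given a path $v \in C^\infty([0,1],\A)$ I would solve the equation $g'(t)g(t)^{-1} = v(t)$, $g(0)=1$, by writing $g(t) = 1 + \sum_{n\geq 1} g_n(t)$ and expanding into degree-homogeneous components. The degree-$n$ equation reads $g_n'(t) = v_n(t) + (\text{terms involving } v_m \text{ and } g_k \text{ with } k < n)$, an inhomogeneous linear ODE in which the inhomogeneity depends only on strictly lower-degree data already solved for. Each such equation is solved by one application of the integration operator $\int_0^{(\cdot)}$ provided by the integrality of $A_n$ (Definition~\ref{reg2}), and smoothness of the resulting $Exp$ map follows from the smoothness of these integrations and of the graded products, again because only finitely many are used in each degree. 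The same recursive, degree-by-degree scheme—now inverting the relation between $v$ and $g$—shows that $\exp : \A \to 1+\A$ is a bijection, with inverse given by the logarithm series $a \mapsto \sum_{k\geq 1} \frac{(-1)^{k+1}}{k} a^k$, which is again finite in each degree; smoothness of both directions gives the smooth bijection.

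The main obstacle I anticipate is not any single hard estimate but the careful bookkeeping of the recursion: one must verify that the inhomogeneous term in the degree-$n$ ODE genuinely depends only on components of degree strictly less than $n$ (so the recursion is well-founded and closes), and that all the operations assembling that term are smooth in the Fr\"olicher sense so that the solution map is smooth \emph{jointly} as a map on path spaces, not merely pointwise in $t$. The grading condition $A_n.A_m \subset A_{n+m}$ is exactly what guarantees well-foundedness, and Proposition~\ref{cvar} (exponential law for functional diffeologies) is the tool that upgrades pointwise-in-$t$ smoothness to smoothness of the path-space maps $Exp$ and $\int_0^{(\cdot)}$.
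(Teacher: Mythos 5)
Your proposal is correct, and it follows essentially the same route as the source: the paper itself quotes Theorem \ref{regulardeformation} from \cite{Ma2013} without reproducing a proof, and the argument there is precisely your mechanism --- the grading $A_n.A_m \subset A_{n+m}$ makes inversion, exponential and logarithm series degree-wise finite, the equation $g'(t)=v(t)g(t)$ is solved by a well-founded recursion whose degree-$n$ step is one application of the integration operator supplied by integrality of $A_n$, and smoothness of $Exp$ and of the group operations follows from the infinite-product Fr\"olicher structure together with the functional-diffeology exponential law (Proposition \ref{cvar}). Your only implicit step, integrality of $\mathcal{A}$ itself, is immediate from componentwise integration, so nothing essential is missing.
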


\begin{Theorem}\label{exactsequence} 
Let
$$ 
1 \longrightarrow K \stackrel{i}{\longrightarrow} G \stackrel{p}{\longrightarrow}  H \longrightarrow 1 
$$
be an exact sequence of Fr\"olicher Lie groups, such that there is 
a smooth section $s : H \rightarrow G,$ and such that the trace 
diffeology  from $G$ on $i(K)$ coincides with the push-forward 
diffeology from $K$ to $i(K).$ We consider also the corresponding 
sequence of Lie algebras
$$ 
0 \longrightarrow \mathfrak{k} \stackrel{i'}{\longrightarrow} 
\mathfrak{g} \stackrel{p}{\longrightarrow}  
	\mathfrak{h} \longrightarrow 0 \; . 
$$
Then,
\begin{itemize}
\item The Lie algebras $\mathfrak{k}$ and $\mathfrak{h}$ are 
integral if and only if the Lie algebra $\mathfrak{g}$ is integral;
\item The Fr\"olicher Lie groups $K$ and $H$ are regular if and 
only if the Fr\"olicher Lie group $G$ is regular.
\end{itemize}
\end{Theorem}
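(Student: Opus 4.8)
The plan is to prove the two equivalences separately, transporting structure across the sequence by means of the section $s$ and its differential $T_e s : \mathfrak{h} \to \mathfrak{g}$; throughout I write $\delta^r g(t) = g'(t)g(t)^{-1}$ for the logarithmic derivative appearing in equation (\ref{loga}). For the integrality equivalence, the key observation is that $T_e s$ splits the Lie algebra sequence: since $p \circ s = \mathrm{id}_H$ gives $T_e p \circ T_e s = \mathrm{id}_{\mathfrak h}$, the map $v \mapsto \bigl((T_e i)^{-1}(v - T_e s(T_e p\, v)),\ T_e p\, v\bigr)$ is an isomorphism $\mathfrak{g} \cong \mathfrak{k} \times \mathfrak{h}$ of Fr\"olicher vector spaces. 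Here the hypothesis that the trace diffeology on $i(K)$ coincides with the push-forward diffeology is exactly what guarantees that $(T_e i)^{-1}$ is smooth on $\ker T_e p = T_e i(\mathfrak{k})$, hence that the inverse identification is smooth. Since integrality is preserved both under finite products (integrate componentwise) and under Fr\"olicher factors (include, integrate in the product, project), the space $\mathfrak{g}$ is integral if and only if both $\mathfrak{k}$ and $\mathfrak{h}$ are.

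For the regularity equivalence the substantial direction is to build $Exp_G$ out of $Exp_H$ and $Exp_K$; after replacing $s$ by $x \mapsto s(x)s(e_H)^{-1}$ I may assume $s(e_H) = e_G$. Given $v \in C^\infty([0,1],\mathfrak{g})$, I would first solve in $H$: put $w = T_e p \circ v$ and $h = Exp_H(w)$, then lift to $\hat h = s \circ h$, a smooth $G$-valued curve with $p \circ \hat h = h$. Seeking the solution in factorized form $g = \hat h\, k$ with $k$ a curve in $i(K)$, the identity $\delta^r(\hat h k) = \delta^r \hat h + Ad_{\hat h}(\delta^r k)$ shows that $g$ solves $\delta^r g = v,\ g(0)=e$ if and only if $k$ solves $\delta^r k = u$, $k(0)=e$, where $u = Ad_{\hat h^{-1}}(v - \delta^r \hat h)$ is a curve determined by $h$ and $v$ alone and no longer involving $k$. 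The crucial point is that $u$ is genuinely $\mathfrak{k}$-valued: applying $T_e p$, using that $p$ is a homomorphism (so $T_e p \circ Ad_{\hat h^{-1}} = Ad_{h^{-1}} \circ T_e p$) together with $T_e p(\delta^r \hat h) = \delta^r(p\circ \hat h) = \delta^r h = w$, one gets $T_e p(u) = Ad_{h^{-1}}(w - w) = 0$. Regularity of $K$ then yields $k = Exp_K(u)$, and $g = \hat h\, k$ is the required solution; the operator $Exp_G := (v \mapsto g)$ is smooth because it is the composite of the smooth operations $v \mapsto w$, $Exp_H$, $h \mapsto s\circ h$, the logarithmic derivative, $Ad$, and $Exp_K$, the currying being supplied by Proposition \ref{cvar}.

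Uniqueness in $G$ and the converse direction are then routine. If $\tilde g$ also solves $\delta^r \tilde g = v$, $\tilde g(0)=e$, then $p\circ \tilde g$ solves the $H$-equation for $w$, so equals $h$ by uniqueness in $H$; hence $\hat h^{-1}\tilde g$ lands in $i(K)$ and solves the $K$-equation for $u$, so equals $k$ by uniqueness in $K$, forcing $\tilde g = g$. For the implication that $G$ regular makes $K$ and $H$ regular, I would set $Exp_H(w) = p \circ Exp_G(T_e s \circ w)$, using $\delta^r(p\circ g) = T_e p(\delta^r g)$ and $T_e p \circ T_e s = \mathrm{id}$, and $Exp_K(u) = i^{-1}\bigl(Exp_G(T_e i \circ u)\bigr)$, the curve $Exp_G(T_e i \circ u)$ being seen to take values in $i(K) = \ker p$ by the same uniqueness-in-$H$ argument (its projection solves the $H$-equation for $0$, hence is constant $e_H$); here again the trace $=$ push-forward hypothesis is precisely what is needed to know that this $i(K)$-valued curve is smooth as a $K$-valued curve and that $Exp_K$ is smooth. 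The main obstacle is thus the factorization step of the second paragraph: isolating the $k$-independent right-hand side $u$ and verifying it is $\mathfrak{k}$-valued, which is exactly what allows the sequential solution first in $H$ and then in $K$.
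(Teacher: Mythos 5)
A preliminary caveat: the paper does not prove this theorem. It is one of the two structural results explicitly ``quote[d] for completeness'' from \cite{Ma2013} (with the remark, right after the statement, that similar results for Fr\'echet Lie groups are in \cite{KM}), so your argument can only be measured against the standard proof in the cited sources, not against anything internal to this paper. With that understood, your proof follows exactly that standard route, and its core is sound. The forward direction of the regularity claim is correct as you present it: the normalization $s(e_H)=e_G$, the ansatz $g=\hat h\,k$, the identity $\delta^r(\hat h k)=\delta^r\hat h+Ad_{\hat h}(\delta^r k)$, the computation $T_ep(u)=Ad_{h^{-1}}(w-w)=0$ showing the twisted right-hand side is $\mathfrak{k}$-valued, and smoothness of $v\mapsto g$ by composition together with Proposition \ref{cvar}; you also locate correctly the role of the trace-equals-push-forward hypothesis (it is what makes $i(K)$-valued data smooth as $K$-valued data). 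The integrality equivalence via the splitting $\mathfrak{g}\cong\mathfrak{k}\times\mathfrak{h}$ is likewise the expected argument, granting two points each worth a sentence: exactness of the displayed Lie-algebra sequence (so that $\ker T_ep=T_ei(\mathfrak{k})$) must be taken as part of the hypotheses, and linearity of $T_es$ is not automatic from the definition of the internal tangent space (its addition is defined through the group law and $s$ is not a homomorphism), although it does hold for the Fr\'echet-type groups to which the theorem is applied in this paper.

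The one genuine gap is in the converse direction ($G$ regular $\Rightarrow$ $K$ and $H$ regular). To see that $Exp_G(T_ei\circ u)$ takes values in $i(K)=\ker p$ you appeal to ``the same uniqueness-in-$H$ argument''; but uniqueness of solutions in $H$ is part of the regularity of $H$, which is precisely what you are trying to prove, so as written this is circular --- and, moreover, your formula for $Exp_H$ only yields existence, whereas Definition \ref{reg1} also demands uniqueness. Both holes are closed by a lemma that uses only the regularity of $G$: if $c$ is a smooth curve in $H$ with $\delta^r c=0$ and $c(0)=e_H$, then $c\equiv e_H$. Indeed, $\delta^r c=0$ forces $c'=0$ as a germ; hence $(s\circ c)'=0$ (test against the functions $f\circ s$, where $f$ runs over smooth real-valued functions on $G$), so $s\circ c$ solves $\delta^r g=0$, $g(0)=e_G$, and uniqueness in $G$ gives $s\circ c\equiv e_G$, whence $c=p\circ s\circ c\equiv e_H$. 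This lemma shows that the projection of $Exp_G(T_ei\circ u)$ to $H$ is constant (so that curve does lie in $\ker p$), and, applied to $h_1^{-1}h_2$ for two solutions $h_1,h_2$ of the same equation, it gives uniqueness in $H$; the analogous statement for $K$ follows by pushing zero-derivative curves through $i$ and using injectivity of $i$ together with uniqueness in $G$. With these repairs your proof is complete, and it coincides in approach with the argument of \cite{Ma2013}.
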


A result similar to Theorem \ref{exactsequence} is also 
valid for Fr\'echet Lie groups, see \cite{KM}.
   
\section{Preliminaries on pseudodifferential operators} \label{PDO}

We introduce the groups and algebras
of non-formal pseudodifferential operators needed to set up our
version of the KP hierarchy.  Basic definitions are 
valid for a real or complex finite-dimensional vector bundle $E$ 
over $S^1$; below (see paragraph ``Notations'') 
we specialize our considerations to the case $E = S^1 \times V$ 
in which $V$ is a finite-dimensional {complex} vector space. 
The following definition appears in \cite[Section 2.1]{BGV}.

\begin{Definition} 
The graded algebra of differential operators acting on the space of 
smooth sections $C^\infty(S^1,E)$ is the algebra $DO(E)$ generated 
by:
	
$\bullet$ Elements of $End(E),$ the group of smooth maps $E \rightarrow 
E$ leaving each fibre globally invariant and which restrict to linear 
maps on each fibre. This group acts on sections of $E$ via (matrix) 
multiplication;
	
	$\bullet$ The differentiation operators
$$\nabla_X : g \in C^\infty(S^1,E) \mapsto \nabla_X g$$ where $\nabla$ 
	is a connection on $E$ and $X$ is a vector field on $S^1$.
\end{Definition}

Multiplication operators are operators of order $0$; differentiation 
operators and vector fields are operators of order 1. In local 
coordinates, a differential operator of order $k$ has the form
$ P(u)(x) = \sum p_{i_1 \cdots i_r} \nabla_{x_{i_1}} \cdots 
\nabla_{x_{i_r}} u(x) \; , \quad r \leq k \; ,$
in which the coefficients $p_{i_1 \cdots i_r}$ can be matrix-valued.
We note by $DO^k(S^1)$,$k \geq 0$, the differential operators of order 
less or equal than $k$.

The algebra $DO(E)$ is graded by order. It is a subalgebra of the algebra of classical pseudodifferential operators
$Cl(S^1,E),$ an algebra that contains, for example, the square root of the Laplacian, its inverse, and all
trace-class operators on $L^2(S^1,E).$  Basic facts on pseudodifferential operators defined 
on a vector bundle $E \rightarrow S^1$ can be found for instance 
in \cite{Gil} and in the review \cite{Pay2014}. 
A global symbolic calculus for pseudodifferential operators has been defined independently 
by J. Bokobza-Haggiag, see \cite{BK} and H. Widom, see \cite{Wid}. In these papers is shown how the 
geometry of the base manifold $M$ furnishes an obstruction to generalizing 
local formulas of composition and inversion of symbols; we do not recall these
formulas here since they are not involved in our computations. 

Following \cite[Section 1]{Ma2006-2}, see also \cite{Ma2016}, we 
assume henceforth that $S^1$ is equipped with charts such that the 
changes of coordinates are translations. We also restrict our 
considerations to complex vector bundles over $S^1$. It is 
well-known that they are trivial, i.e. $E = S^1 \times V.$ Taking 
this fact into account, we use the following notational 
conventions:

\vskip 6pt
\noindent
\textbf{Notations.} 
We note by  $ PDO (S^1, V) $ 
(resp.  $ PDO^o (S^1, V)$, resp. $Cl(S^1,V)$) the space of 
pseudodifferential operators (resp. pseudodifferential operators of 
order $o$, resp. classical pseudodifferential operators) acting on 
smooth sections of $E$, and by 
$Cl^o(S^1,V)= PDO^o(S^1,V) \cap Cl(S^1,V)$ the space of classical 
pseudodifferential operators of order $o$. We also denote by 
$Cl^{o,\ast}(S^1,V)$ the group of units of $Cl^o(S^1,V)$.

\vskip 6pt

A topology on spaces of classical pseudo differential operators has 
been described by Kontsevich and Vishik in \cite{KV1}; see also 
\cite{CDMP,PayBook,Scott} for other descriptions.
We use all along this work the Kontsevich-Vishik topology. This is 
a Fr\'echet topology such that each space 
$Cl^o(S^1,V)$ is closed in $Cl(S^1,V).$ 
We set
$$ PDO^{-\infty}(S^1,V) = \bigcap_{o \in \Z} PDO^o(S^1,V) \; .$$
It is well-known that $PDO^{-\infty}(S^1,V)$ is a two-sided ideal 
of $PDO(S^1,V)$, see e.g.
\cite{Gil,Scott}. 
Therefore, we can define the quotients
$$\mathcal{F}PDO(S^1,V) = PDO(S^1,V) / PDO^{-\infty}(S^1,V),$$ 
$$\F Cl(S^1,V) = Cl(S^1,V) /  PDO^{-\infty}(S^1,V),$$
$$ \quad \F Cl^o(S^1,V) = Cl^o(S^1,V)  / PDO^{-\infty}(S^1,V)\; .$$
The script font $\F$ stands for {\it  formal pseudodifferential 
operators}. The quotient $\mathcal{F}PDO(S^1,V)$ is an algebra 
isomorphic to the set of formal symbols, see \cite{BK}, 
and the identification is a morphism of $\mathbb{C}$-algebras for 
the usual multiplication on formal symbols (see e.g. \cite{Gil}). 

A known result on the structure of the spaces we are using is the
following. 

\begin{Theorem} 
The groups $Diff_+(S^1)$, $Cl^{0,*}(S^1,V),$  and 
$\mathcal{F}Cl^{0,*}(S^1,V)$, in which  
${\mathcal F}Cl^{0,*}(S^1,V)$ is the group of units of the algebra 
${\mathcal F}Cl^{0}(S^1,V)$, are regular Fr\'echet Lie groups.
\end{Theorem}

Indeed, it follows from \cite{Ee,Om} that $Diff_+(S^1)$ is open in 
the Fr\'echet manifold $C^\infty(S^1,S^1)$. This fact makes it a 
Fr\'echet manifold and, following \cite{Om}, a regular Fr\'echet 
Lie group. The same result follows from the discussion appearing in 
\cite[Section III.3]{Neeb2007}. Also, it is noticed in 
\cite{Ma2006} that the results of \cite{Gl2002} imply that the 
group $Cl^{0,*}(S^1,V)$ (resp. $\F Cl^{0,*}(S^1,V)\,$) is open in 
$Cl^0(S^1,V)$ (resp. $\F Cl^{0}(S^1,V)\,$) and that it is a regular 
Fr\'echet Lie group. This fact is also discussed in 
\cite[Proposition 4]{Pay2014}. 
Our comments after Definition 2, see also Remark 2.6 in 
\cite{MR2016}, imply that these groups are also regular Fr\"olicher 
Lie groups.

\begin{Definition} \label{d7} 
A classical pseudodifferential operator $A$ on $S^1$ is called odd 
class if and only if for all $n \in \Z$ and all 
$(x,\xi) \in T^*S^1$ we have:
	$$ \sigma_n(A) (x,-\xi) = (-1)^n  \sigma_n(A) (x,\xi)\; ,$$
	in which $\sigma_n$ is the symbol of $A$. 
\end{Definition}

This {particular} class of pseudodifferential operators has been 
introduced in \cite{KV1,KV2}; it is also called the ``even-even 
class'', see \cite{Scott}. We will follow the terminology 
of the first two references: hereafter, the notation $Cl_{odd}$ 
will refer to odd class classical pseudodifferential operators. 

\smallskip

We will need the 
following result, intrinsically present in \cite{KV1,Scott} which
we prove quickly:

\begin{Lemma}
$Cl_{odd}(S^1,V)$ and $Cl^0_{odd}(S^1,V)$ are associative algebras.
\end{Lemma}
\begin{proof}
That $Cl_{odd}(S^1,V)$ is an associative algebra can be checked thus: we 
consider the map $\phi$ given by
\begin{equation}\label{phi} 
\begin{array}{cccl} \phi: & Cl(S^1,V) &\rightarrow&  \mathcal{F}Cl(S^1,V) \\ & A &\mapsto & 
\sum_{n \in \mathbb{N}}\sigma_{-n}(x,\xi) - (-1)^n\sigma_{-n}(x,-\xi)\; . \end{array} 
\end{equation}
Then, $$Cl_{odd}(S^1,V)=Ker(\phi)\; ,$$
which proves the first claim. 
That $Cl^0_{odd}(S^1,V)$ is an associative algebra follows from 
the previous result and the standard fact that zero-order classical 
pseudodifferential  operators form an algebra, see for
instance \cite{Pay2014}.	
\end{proof}

{
Now we observe that because of the symmetry property stated in 
Definition \ref{d7}, an odd class pseudodifferential operator $A$ has a 
partial symbol of non-negative order $n$ that reads
\begin{equation} \label{alfa}
\sigma_{n}(A)(x,\xi) = \gamma_n(x) (i\xi)^n \, ,
\end{equation} 
where $\gamma_n \in C^\infty(S^1,L(V))$.  This consequence of 
Definition $\ref{d7}$ allows us to check the following result, which 
is of importance for the upcoming description of our KP hierarchy: 

\begin{Proposition}\label{SD}
	The space of odd class classical pseudodifferential operators 
	satisfies the direct sum decomposition 
\begin{equation} \label{beta0}
Cl_{odd}(S^1,V) = Cl_{odd}^{-1}(S^1,V) \oplus DO(S^1,V)\; .
\end{equation}
\end{Proposition}
}

\smallskip

We finish this section with a proposition which singles out an 
interesting Lie group included in $Cl_{odd}(S^1,V)$.

\begin{Proposition}
	The algebra $Cl_{odd}^0(S^1,V)$ is a closed subalgebra of 
	$Cl^0(S^1,V)$. Moreover, $Cl_{odd}^{0,*}(S^1,V)$ is 
	\begin{itemize}
		\item an open subset of $Cl^0_{odd}(S^1,V)$ and,
		\item a regular Fr\'echet Lie group.
	\end{itemize} 
\end{Proposition}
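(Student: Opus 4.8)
The plan is to prove the three assertions in sequence, reducing each to results already available for the full pseudo-differential algebra. First I would establish that $Cl_{odd}^0(S^1,V)$ is a closed subalgebra of $Cl^0(S^1,V)$. The key observation is that the odd-class condition in Definition \ref{d7} is a family of \emph{linear} constraints on the homogeneous symbol components: for each $n$ the map sending an operator $A$ to the difference $\sigma_n(A)(x,-\xi) - (-1)^n \sigma_n(A)(x,\xi)$ is continuous for the Kontsevich-Vishik topology, since each symbol-extraction map $A \mapsto \sigma_n(A)$ is continuous in that topology and the reflection $\xi \mapsto -\xi$ is a fixed smooth operation on $T^*S^1$. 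Thus $Cl_{odd}^0(S^1,V)$ is the intersection over $n$ of kernels of continuous linear maps, hence a closed linear subspace; that it is a subalgebra follows from the fact that the odd-class condition is preserved under composition, which one checks on the level of the composition formula for symbols (the order-parities add correctly).

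Next I would prove that $Cl_{odd}^{0,*}(S^1,V)$ is open in $Cl^0(S^1,V)$. Here I would lean on the result already quoted in the excerpt from \cite{Ma2006,Gl2002}, namely that $Cl^{0,*}(S^1,V)$ is open in $Cl^0(S^1,V)$. By definition $Cl_{odd}^{0,*}(S^1,V) = Cl_{odd}^0(S^1,V) \cap Cl^{0,*}(S^1,V)$, so it is open \emph{in the subspace} $Cl_{odd}^0(S^1,V)$ as the intersection of that subspace with an open set of the ambient space. The subtle point --- and the one I expect to be the main obstacle --- is that openness as a subset of $Cl^0(S^1,V)$ as stated requires knowing that an invertible odd-class operator has an odd-class inverse, so that invertibility \emph{within} $Cl_{odd}^0$ coincides with invertibility in $Cl^0$. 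This is the crucial closure property: I would verify that if $A \in Cl_{odd}^{0,*}$ then $A^{-1}$ is again odd class, by examining the symbol of the inverse built recursively from the symbol of $A$ via the usual inversion formula and checking that each homogeneous component inherits the correct parity. Granting this, the group of units of $Cl_{odd}^0$ is exactly the trace of $Cl^{0,*}(S^1,V)$ on the closed subalgebra, and openness in the subspace topology is immediate.

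Finally I would upgrade these facts to the statement that $Cl_{odd}^{0,*}(S^1,V)$ is a regular Fr\'echet Lie group. Since $Cl_{odd}^0(S^1,V)$ is a closed subspace of the Fr\'echet space $Cl^0(S^1,V)$, it is itself Fr\'echet, and an open subset of a Fr\'echet space is a Fr\'echet manifold; multiplication and inversion are smooth because they are restrictions of the smooth operations on $Cl^{0,*}(S^1,V)$, so $Cl_{odd}^{0,*}(S^1,V)$ is a Fr\'echet Lie group. For regularity I would argue that the defining differential equation of Definition \ref{reg1} can be solved within the odd class: given a path $v(t)$ in the Lie algebra $Cl_{odd}^0(S^1,V)$, the solution $g(t)$ produced by the already-established regularity of $Cl^{0,*}(S^1,V)$ must itself remain odd class, because the equation $g'(t) g(t)^{-1} = v(t)$ together with the closure of the odd class under composition and inversion forces each symbol component of $g(t)$ to satisfy the parity condition. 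The map $Exp$ is then simply the corestriction of the $Exp$ for the ambient group, and its smoothness is inherited; invoking the Fr\'echet analogue of Theorem \ref{exactsequence} (or directly the closedness of the odd-class constraint under the relevant operations) closes the argument.
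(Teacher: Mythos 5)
Your first two steps are sound and essentially coincide with the paper's own argument: the paper also exhibits $Cl^0_{odd}(S^1,V)$ as the kernel of the continuous map $A \mapsto \sum_{n}\left(\sigma_{-n}(A)(x,\xi)-(-1)^{n}\sigma_{-n}(A)(x,-\xi)\right)$ into $\mathcal{F}Cl^{0}(S^1,V)$, and it obtains openness of the unit group by intersecting the closed subalgebra with an open set of invertible elements (the paper intersects with $GL(L^2(S^1,V))$, you intersect with $Cl^{0,*}(S^1,V)$; both reductions rest on the point you correctly isolate, namely that the inverse of an invertible odd-class operator is again odd class, checked on symbols). Your reading of ``open subset of $Cl^0(S^1,V)$'' as openness in the subspace topology of the closed subalgebra is also the only sensible one, and it is what the paper actually proves.

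The genuine gap is in the regularity step. The claim that the ambient solution $g(t)$ of $g'(t)g(t)^{-1}=v(t)$, $g(0)=e$, furnished by regularity of $Cl^{0,*}(S^1,V)$ ``must remain odd class because the closure of the odd class under composition and inversion forces each symbol component of $g(t)$ to satisfy the parity condition'' is an assertion, not an argument: closure of a subalgebra under products and inverses does not by itself imply that flows of differential equations with right-hand side in the subalgebra stay in the subalgebra. The two natural ways to prove such invariance both fail you here: (a) produce a solution inside $Cl^{0,*}_{odd}(S^1,V)$ and invoke uniqueness in the ambient group --- but producing that solution \emph{is} the regularity of $Cl^{0,*}_{odd}(S^1,V)$ you are trying to establish, so this is circular; or (b) track a concrete construction of the solution (limits of product integrals, Picard iteration) and use closedness of $Cl^0_{odd}(S^1,V)$ --- delicate in the Fr\'echet setting, where even existence of solutions is the hard part and the exponential series need not converge. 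Moreover, Theorem \ref{exactsequence}, which you invoke at the end, is not applicable: no short exact sequence of regular Fr\"olicher Lie groups with smooth section involving $Cl^{0,*}_{odd}(S^1,V)$ is available at this stage. The paper sidesteps all of this by applying Gl\"ockner's theory of algebras with open unit group \cite{Gl2002} (together with \cite{Neeb2007} for regularity) \emph{directly to the Fr\'echet algebra} $Cl^0_{odd}(S^1,V)$: once this algebra is known to be closed (hence Fr\'echet), with unit group open in it and inversion continuous, the general theorem yields at once that the unit group is a regular Fr\'echet Lie group, with no invariance argument needed. Replacing your flow-invariance claim by this direct application of \cite{Gl2002,Neeb2007} closes the gap.
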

\begin{proof}
	We note by $\sigma(A)(x,\xi)$ the total formal symbol of 
	$A \in Cl^0(S^1,V).$ Similarly as in (\ref{phi}) we let 
$$
\phi: Cl^0(S^1,V)\rightarrow \mathcal{F}Cl^{0}(S^1,V)
$$ defined by 
$$\phi(A) = 
\sum_{n \in \mathbb{N}}\sigma_{-n}(x,\xi) - (-1)^n\sigma_{-n}(x,-\xi).
$$
This map is smooth, and $$Cl^{0}_{odd}(S^1,V)= Ker(\phi),$$ which 
shows that 	$Cl_{odd}^0(S^1,V)$ is a closed subalgebra of   
$Cl^0(S^1,V).$ 
		Moreover, if $H = L^2(S^1,V),$ 
		$$Cl^{0,*}_{odd}(S^1,V) = Cl^{0}_{odd}(S^1,V)\cap GL(H),$$
		which proves that $Cl_{odd}^{0,*}(S^1,V)$ is open in the 
		Fr\'echet algebra $Cl^0(S^1,V),$ and it follows 
		that it is a regular Fr\'echet Lie group by arguing along 
		the lines of \cite{Gl2002,Neeb2007}.
	\end{proof}


\section{The h-KP hierarchy with non-formal odd-class operators} 
\label{KP}

First of all let us make some comments on the spaces just introduced. 
In order to find an analogue to Equation (\ref{kpintro}) we need to 
consider a space of pseudodifferential operators which is close 
with respect to taking powers of operators.
Since the space of odd-class pseudodifferential operators 
is an associative algebra, we can take this class as the space in which 
the dependent variable appearing in Equation (\ref{kpintro}) lives. 
Proposition \ref{SD} implies 
that we have the diagram of Lie groups and Lie algebras

$$
\begin{array}{ccccc}
Cl^{-1,\ast}_{odd}(S^1,V) & \rightarrow & Cl^{\ast}_{odd}(S^1,V) & \rightarrow  & (H \quad ?) \\
\downarrow &  & \downarrow &  &  \\
\mathbb{C} Id \oplus Cl^{-1}_{odd}(S^1,V) & \rightarrow & Cl_{odd}(S^1,V) &
\rightarrow & DO(S^1,V) \\
 & & \| & & \\
 & & Cl^{-1}_{odd}(S^1,V) \oplus DO(S^1,V) & & 
\end{array}
$$

\smallskip

{
The problem is to find a suitable Fr\"olicher Lie group $H$.
If it were possible, we could set up an equation of the form 
$$
\frac{\partial}{\partial t_n} L = [(L^n)_D , L] 
$$ 
for fixed $n$, where $(.)_D$ denotes projection into the space of 
differential operators, and try to study its corresponding Cauchy problem with the help of a factorization theorem, as in our previous
papers \cite{Ma2013,MR2019,MR2016}. Now, in these articles we find a 
regular Fr\"olicher Lie group $H$ with Lie algebra the space of
differential operators 
by using formal differential operators of infinite order but, if we
proceed in this way in the present context, we would leave the
framework of non-formal pseudodifferential operators. Thus, instead of 
doing this we use series, motivated by \cite{Ma2013,MR2019} 
and \cite[Subsection 4.2]{MR2016}.
}

\begin{Definition} \label{dfs}
Let $h$ be a formal parameter. The set of odd class 
$h$-pseudodifferential operators is the set of formal series 
\begin{equation} \label{alpha}
Cl_{h,odd}(S^1,V)=
\left\{ \sum_{n \in \mathbb{N}} a_n h^n \, | \, 
a_n \in Cl_{odd}^{n}(S^1,V)  \right\}\; .
\end{equation}
\end{Definition}

We have the following result on the structure of 
$Cl_{h,odd}(S^1,V)$:

\begin{Theorem} \label{fs}
The set $Cl_{h,odd}(S^1,V)$ is a Fr\'echet algebra, and its group 
of units given by 
\begin{equation} \label{beta}	
	Cl_{h,odd}^*(S^1,V)=
	\left\{ \sum_{n \in \mathbb{N}} a_n h^n \, | \, a_n \in 
Cl_{odd}^{n}(S^1,V), a_0\in Cl_{odd}^{0,*}(S^1,V)  \right\} \; ,
\end{equation}
is a regular Fr\'echet Lie group.
\end{Theorem}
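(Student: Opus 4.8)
The plan is to prove the three assertions in the order stated: the Fr\'echet algebra structure, the identification of the group of units, and finally regularity, the last of which I will reduce to Theorems \ref{regulardeformation} and \ref{exactsequence}.

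First I would equip $Cl_{h,odd}(S^1,V)$ with the topology of the countable product $\prod_{n \in \N} Cl^n_{odd}(S^1,V)$, each factor carrying the Kontsevich--Vishik topology. Since $Cl^n_{odd}(S^1,V)$ is closed in the Fr\'echet space $Cl^n(S^1,V)$ (by the kernel argument used above for $n=0$), it is itself Fr\'echet, and a countable product of Fr\'echet spaces is Fr\'echet. For the algebra structure the only point is that the Cauchy product is internal and continuous: writing $(\sum_n a_n h^n)(\sum_m b_m h^m) = \sum_k c_k h^k$ with $c_k = \sum_{n+m=k} a_n b_m$, each $c_k$ is a \emph{finite} sum, and because composition of classical operators satisfies $Cl^n_{odd}\cdot Cl^m_{odd} \subset Cl^{n+m}_{odd}$ (composition adds orders and preserves the parity condition of Definition \ref{d7}), one has $c_k \in Cl^k_{odd}(S^1,V)$. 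Continuity of multiplication then follows from continuity of operator composition in the Kontsevich--Vishik topology together with the finiteness of each $c_k$.

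Next I would identify the units. If $A = \sum_n a_n h^n$ has an inverse $\sum_n b_n h^n$, comparison of constant terms gives $a_0 b_0 = b_0 a_0 = Id$, so necessarily $a_0 \in Cl^{0,*}_{odd}(S^1,V)$. Conversely, if $a_0$ is invertible I would factor $A = a_0(1+N)$ with $N = \sum_{n\geq 1}(a_0^{-1}a_n)h^n$, a series with vanishing constant term, and invert $1+N$ by the geometric series $\sum_{j\geq 0}(-N)^j$. Because $N$ carries only positive powers of $h$, only finitely many $j$ contribute to each coefficient of $h^k$, so the series converges in the product topology; inspecting orders shows its $h^k$-coefficient lies in $Cl^k_{odd}(S^1,V)$. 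Hence $A^{-1} = (1+N)^{-1}a_0^{-1} \in Cl_{h,odd}(S^1,V)$, and the group of units is exactly the set in (\ref{beta}).

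Finally, for regularity I would exploit the split short exact sequence
$$ 1 \longrightarrow 1+\mathcal{A} \longrightarrow Cl^*_{h,odd}(S^1,V) \stackrel{\eps}{\longrightarrow} Cl^{0,*}_{odd}(S^1,V) \longrightarrow 1, $$
where $\mathcal{A} = \bigoplus_{n\geq 1} Cl^n_{odd}(S^1,V)\,h^n$, the map $\eps$ is evaluation at $h=0$ (a group morphism, since the constant term of a product is the product of constant terms), and the section is the inclusion of $Cl^{0,*}_{odd}(S^1,V)$ as constant series, which is visibly smooth. The group $1+\mathcal{A}$ is a regular Fr\"olicher (indeed Fr\'echet) Lie group by Theorem \ref{regulardeformation}, applied to $A_n = Cl^n_{odd}(S^1,V)\,h^n$: each $A_n$ is closed in a Fr\'echet space, hence integral in the sense of Definition \ref{reg2}, and the graded multiplication $A_n\cdot A_m \subset A_{n+m}$ is smooth as above. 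The quotient $Cl^{0,*}_{odd}(S^1,V)$ is a regular Fr\'echet Lie group by the Proposition established above. Applying the Fr\'echet version of Theorem \ref{exactsequence} --- for which the smooth section just exhibited and the coincidence of the trace diffeology induced from $Cl^*_{h,odd}(S^1,V)$ on $1+\mathcal{A}$ with the push-forward diffeology from $1+\mathcal{A}$ must be checked --- yields that $Cl^*_{h,odd}(S^1,V)$ is regular. The main obstacle is precisely this last step: the nontrivial degree-zero part $a_0$ prevents a direct appeal to Theorem \ref{regulardeformation}, so the crux is to recognize the splitting above and to verify the hypotheses of the exact-sequence theorem, rather than to compute any exponential by hand.
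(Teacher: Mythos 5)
Your proof is correct, and its first two steps (the product-topology Fr\'echet algebra structure and the characterization of units through invertibility of $a_0$, with the geometric-series inverse) are essentially the paper's. Your regularity argument, however, takes a genuinely different route. The paper, having noted that invertibility depends only on $a_0$, so that $Cl_{h,odd}^*(S^1,V)$ is \emph{open} in the Fr\'echet algebra $Cl_{h,odd}(S^1,V)$, concludes in one stroke by Gl\"ockner's theorem \cite{Gl2002} applied to the whole algebra: a Fr\'echet algebra with open unit group and smooth multiplication and inversion has a regular Fr\'echet Lie group of units. You instead split off the degree-zero part via evaluation at $h=0$ and combine Theorem \ref{regulardeformation} (for the kernel $1+\mathcal{A}$), the Proposition on $Cl_{odd}^{0,*}(S^1,V)$ (for the quotient --- the only place where Gl\"ockner's result enters) and Theorem \ref{exactsequence}. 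Interestingly, this is the strategy the paper announces (``mostly an application of Theorem \ref{regulardeformation}'') but does not actually carry out. What your route buys: Gl\"ockner's theorem is needed only in degree zero, and the split exact sequence is structural information in the spirit of the Birkhoff-Mulase decompositions of the later sections. What it costs: all hypotheses of Theorem \ref{exactsequence} must be verified --- you correctly flag the compatibility of the trace and push-forward diffeologies on $1+\mathcal{A}$ but leave it unchecked (it holds because both coincide with the product diffeology on the coefficients $(a_n)_{n\geq 1}$), and you also need $Cl_{h,odd}^*(S^1,V)$ to be a Fr\"olicher Lie group beforehand, i.e.\ smooth inversion, which your geometric-series formula does supply. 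Finally, a notational slip: the kernel consists of formal series $1+\sum_{n\geq 1}a_nh^n$, so $\mathcal{A}$ is the infinite \emph{product} of the spaces $A_n$ (as in Theorem \ref{regulardeformation}), not the direct sum $\bigoplus_{n\geq 1}A_n$.
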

\begin{proof}
As we showed in Proposition 20 (and it follows from the work 
\cite{Gl2002} by Gl\"ockner) the group $Cl^{0,*}_{odd}(S^1,V)$ is a 
regular Fr\'echet Lie group since it is open in 
$Cl^0_{odd}(S^1,V)$. According to classical properties of 
composition of pseudodifferential 
operators \cite{Scott}, see also \cite{KV1}, the natural 
multiplication on $Cl^{0,*}_{odd}(S^1,V)$ is smooth for the 
	product topology inherited from the classical topology on 
	classical pseudodifferential operators, and inversion is smooth 
	using the classical formulas of inversion of series. 
	In this way we conclude that $Cl_{h,odd}(S^1,V)$ is a Fr\'echet 
	algebra. 
	
Moreover, the series 
$\sum_{n \in \mathbb{N}} a_nh^n \in Cl_{h,odd}(S^1,V)$ is 
invertible if and only if 
$a_0 \in Cl_{odd}^{0,*}(S^1,V),$ which shows that 
$Cl_{h,odd}^*(S^1,V)$ is open in $Cl_{h,odd}(S^1,V)$. The same 
result quoted before, from \cite{Gl2002}, ends the proof.
\end{proof}

\begin{remark}
The assumption $a_n \in Cl^n_{odd}$ in Definition $\ref{dfs}$ and 
Theorem $\ref{fs}$ can be relaxed to the condition 
$$a_0 \in Cl^{0,*}_{odd} \hbox{ and } \forall n \in \N^*, 
a_n \in Cl_{odd}\; ;$$ 
this is sufficient for having a regular Lie group. 
The more stringent growth conditions imposed 
in {\rm (\ref{alpha})} and {\rm (\ref{beta})} will ensure 
regularity {\em and} they will allow us to use arguments borrowed
from \cite[Subsection 4.1]{MR2016} for proving existence and smoothness 
of solutions to our KP hierarchy to be introduced next. 
\end{remark}

Now we need to split the algebra $Cl_{h,odd}(S^1,V)$. We do
so in a very straightforward way: since an operator 
$A \in Cl_{odd}(S^1,V)$ splits into $A = A_S + A_D$, in which 
$A_S \in Cl^{-1}_{odd}(S^1,V)$ and $A_D \in DO(S^1,V)$, see Proposition 26, we have, for 
$A = \sum_{n \in \mathbb{N}} a_nh^n \in Cl_{h,odd}(S^1,V),$ 
 the decomposition $A = A_S + A_D$ with 
 $$A_S = \sum_{n \in \mathbb{N}} (a_n)_Sh^n$$
 and  $$A_D = \sum_{n \in \mathbb{N}} (a_n)_Dh^n\; .$$
We set $DO_h(S^1,V) = \{  \sum_{n \in \mathbb{N}} a_n h^n :
a_n \in DO(S^1,V) \}$.

\smallskip

We now introduce our version of the KP hierarchy with non-formal 
pseudodifferential operators. 
Let us assume that $t_1, t_2, \cdots, t_n , \cdots,$ are an 
infinite number of different formal variables which 
will become the independent variables of our equation. We make the 
following definition:

\begin{Definition} \label{33}
Let $S_0 \in Cl^{-1,*}_{odd}(S^1,V)$ and let 
$L_0 = S_0 (h\frac{d}{dx})S_0^{-1}.$
We say that an operator 
$$ 
L(t_1,t_2,\cdots) \in  Cl_{h,odd}(S^1,V)[[ht_1,...,h^nt_n...]]
$$ satisfies the $h-$deformed KP hierarchy if and only if 
\begin{equation} \label{jph}
	\left\{\begin{array}{cl} 
	L(0) = & L_0 \\
	\frac{d}{dt_n}L =& \left[(L^n)_D, L\right] \; .
	\end{array}
	\right.
\end{equation}
\end{Definition}

Let us make some comments on Definition \ref{33}. We have 
followed Mulase, see \cite{M1,M3}, in fixing the ``time dependence'' of 
the dependent variable via series. Thus, our equation (\ref{jph}) is 
written for a dependent variable of the form
$$
L(t_1,t_2,\cdots) = \sum_{s \in T} L_s \tau^s \; ,
$$
in which $L_t \in Cl_{h,odd}(S^1,V)$, 
$\tau = (ht_1) (h^2 t_2)\cdots$ and for $s$ in $T$,
$s = (\alpha_1,\alpha_2, \cdots)$  (in which $\alpha_i \in \N$, 
$\alpha_i \neq 0$ just for finite number of indexes $i$) we define
$\tau^s = (ht_1)^{\alpha_1} (h^2 t_2)^{\alpha_2}\cdots$.
This series can be understood as a smooth function on the 
algebraic sum 
\begin{equation} \label{times}
T= \bigoplus_{n \in \mathbb{N}^*}(\mathbb{R}t_n)
\end{equation}
for the product topology and product Fr\"olicher structure, 
see Proposition \ref{prod2} and \cite{Ma2013,MR2016}. The ``space  
dependence'', on the other hand, is fixed with the help of a 
derivation on $S^1$ which in standard coordinates (see Section 3)
reads $d/dx$. Finally, we stress the fact that we are 
scaling our variables via
$$\left\{ \begin{array}{ccc}t_n& \mapsto & h^nt_n \\
\frac{d}{dx} & \mapsto &h\frac{d}{dx} \end{array}\right. \; .$$
Our reason to do this is that we need to work with regular 
Fr\"olicher Lie groups, and this scaling allows us to do so, as we 
explain in \cite{Ma2013,MR2016}.

\smallskip

{
In this context, we have a ``Mulase factorization", in the spirit of
\cite{M1,M3} and \cite{Ma2013,MR2016}, } which 
looks schematically as follows:

$$
\begin{array}{ccccc}
Cl^{-1,\ast}_{h,odd}(S^1,V) & \rightarrow & Cl^{\ast}_{h,odd}(S^1,V) 
& \rightarrow  & DO_h^\ast (S^1,V) \\
\downarrow &  & \downarrow &  & \downarrow \\
\mathbb{C} Id \oplus Cl^{-1}_{h,odd}(S^1,V) & \rightarrow 
& Cl_{h,odd}(S^1,V) &
\rightarrow & DO_h(S^1,V) 
\end{array}
$$
in which 
$$
DO_h^\ast(S^1,V) = \left\{ \sum_{n \in \N} a_n h^n | a_n \in 
DO(S^1,V)\,, a_0 \in DO^{0,\ast}(S^1,V) \right\}  \; .
$$

Now we solve the initial value problem (\ref{jph}). Since 
$S_0 \in CL_{odd}^{-1,\ast}(S^1,V)$ and $h (d/dx) \in DO_h(S^1,V)$,
we have that the initial condition $L_0$ belongs to 
$Cl_{h,odd}(S^1,V)$. Also, we need to use the operator
$U_h = \exp\left(\sum_{n \in N^*} h^nt_n (L_0)^n\right)$. We note that 
$\sum_{n \in N^*} h^n t_n (L_0)^n$ belongs to $Cl_{h,odd}(S^1,V)[[ht_1,h^2 
t_2, \cdots]]$. In the theorem below we consider this sum as a 
series and also as a smooth function with domain $T$ and image in 
$Cl_{h,odd}(S^1,V)$, in which $T$ is given by (\ref{times}).

\begin{Theorem} \label{hKP}
Let $U_h(t_1,...,t_n,...) = \exp\left(\sum_{n \in N^*} h^nt_n (L_0)^n\right) \in Cl_{h,odd}^\ast(S^1,V).$ 
	Then:
	\begin{itemize}
		\item There exists a unique pair $(S,Y)$ such that
		\begin{enumerate}
			\item $U_h = S^{-1}Y,$
			\item $Y \in Cl_{h,odd}^*(S^1,V)_D$
			\item $S \in Cl_{h,odd}^*(S^1,V)$ and $S - 1 \in Cl_{h,odd}(S^1,V)_S.$ 
		\end{enumerate}
	Moreover, the map 
	$$(S_0,t_1,...,t_n,...)\in Cl^{0,*}_{odd}(S^1,V)\times T \mapsto (U_h,Y)\in (Cl_{h,odd}^*(S^1,V))^2$$ is smooth.
		\item The operator $L \in Cl_{h,odd}(S^1,V)[[ht_1,...,h^nt_n...]]$ given by $L = S L_0 S^{-1} = Y L_0 Y^{-1}$, 
		is the unique solution to the hierarchy of equations 
		{\begin{equation}		\label{formalKP} 
		\left\{\begin{array}{ccl}
		\frac{d}{dt_n}L &=& \left[(L^n)_D(t), L(t)\right] =  -\left[(L^n)_S(t), L(t)\right]\\
		L(0) & = & L_0 \\
		\end{array}\right. \; ,
		\end{equation}}
		in which the operators in this infinite system are 
		understood as formal operators. 
		\item The operator $L \in Cl_{h,odd}(S^1,V)[[ht_1,...,h^nt_n...]]$ given by $L = S L_0 S^{-1} = Y L_0 Y^{-1}$ 
		is the unique solution of the hierarchy of equations 
		\begin{equation}
		\label{trueKP} \left\{\begin{array}{ccl}
		\frac{d}{dt_n}L &=& \left[(L^n)_D(t), L(t)\right] =  -\left[(L^n)_S(t), L(t)\right]\\
		L(0) & = & L_0 \\
		\end{array}\right. 
		\end{equation}
		in which the operators in this infinite system are understood as odd class, non-formal operators. 
	\end{itemize}
\end{Theorem}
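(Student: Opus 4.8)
The plan is to build everything on the non-formal Birkhoff--Mulase decomposition of Theorem \ref{SY}, transported to the $h$-graded regular Fr\'echet Lie group $Cl^{*}_{h,odd}(S^1,V)$ of Theorem \ref{fs}, and then to run the classical Sato--Mulase dressing computation, taking care that every operator produced is a genuine odd-class operator rather than a formal symbol. First I would establish the factorization $U_h = S^{-1}Y$. Writing the requirement as $S\,U_h = Y$ and expanding $S = 1 + \sum_{n\ge 1} s_n h^n$, $U_h = \sum_n u_n h^n$, $Y = \sum_n y_n h^n$, I would solve for the pairs $(s_n,y_n)$ recursively in the $h$-grading: at each order the equation splits, via the direct sum $Cl_{odd} = Cl^{-1}_{odd}\oplus DO$ of Proposition \ref{SD} extended coefficientwise to $Cl_{h,odd}(S^1,V)$, into its order $\le -1$ component (which determines $s_n$) and its differential component (which determines $y_n$); uniqueness of this splitting gives existence and uniqueness of $(S,Y)$ with $S-1\in Cl_{h,odd}(S^1,V)_S$ and $Y\in Cl^{*}_{h,odd}(S^1,V)_D$. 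This is precisely the scheme of Theorem \ref{SY} carried out inside the group of Theorem \ref{fs}. Smoothness of $(S_0,t_1,\dots)\mapsto(U_h,Y)$ would then follow from smoothness of $S_0\mapsto L_0$, of the exponential map (Theorem \ref{regulardeformation}), and of the decomposition map, as in Theorem \ref{SY}.

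Next, I would note that $U_h=\exp\!\big(\sum_n h^n t_n L_0^n\big)$ is a series in powers of $L_0$ and hence commutes with $L_0$; therefore $Y L_0 Y^{-1} = S\,U_h L_0 U_h^{-1}\,S^{-1} = S L_0 S^{-1}$, so the two formulas for $L$ agree, and $L(0)=L_0$ because $U_h|_{t=0}=1$. The evolution equations come from differentiation. Using $\frac{d}{dt_n}U_h = h^n L_0^n U_h$ together with $U_h Y^{-1} = S^{-1}$, I would compute
\[ \left(\frac{d}{dt_n}Y\right)Y^{-1} - \left(\frac{d}{dt_n}S\right)S^{-1} = S\left(\frac{d}{dt_n}U_h\right)Y^{-1} = h^n S L_0^n S^{-1} = h^n L^n. \]
Here $\left(\frac{d}{dt_n}Y\right)Y^{-1}$ lies in the differential part and $\left(\frac{d}{dt_n}S\right)S^{-1}$ in the order $\le -1$ part, so uniqueness of the splitting forces $(h^n L^n)_D = \left(\frac{d}{dt_n}Y\right)Y^{-1}$ and $(h^n L^n)_S = -\left(\frac{d}{dt_n}S\right)S^{-1}$. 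Substituting into $\frac{d}{dt_n}L = [\left(\frac{d}{dt_n}Y\right)Y^{-1}, L] = [\left(\frac{d}{dt_n}S\right)S^{-1}, L]$ then yields the commutator form $\frac{d}{dt_n}L = [(L^n)_D, L] = -[(L^n)_S, L]$, matching (\ref{formalKP}) and (\ref{trueKP}) up to the factor $h^n$ absorbed in the $h$-scaling of the variables $t_n$.

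Because $S$, $Y$, and $L$ are honest odd-class operators built through the non-formal decomposition, this identity holds at the level of genuine operators, which gives the non-formal statement (\ref{trueKP}); projecting modulo $PDO^{-\infty}(S^1,V)$ into $\F Cl$ then yields the formal statement (\ref{formalKP}). For uniqueness of the solution I would run the dressing in reverse: given any $L$ satisfying the hierarchy with $L(0)=L_0$, Proposition \ref{SD} lets me form $(L^n)_S$, and the regularity of $Cl^{*}_{h,odd}(S^1,V)$ furnishes the unique solution $\tilde S$ of $\frac{d}{dt_n}\tilde S\,\tilde S^{-1} = -(L^n)_S$ with $\tilde S(0)=1$; one checks that $\tilde S L_0 \tilde S^{-1}$ solves the same system with the same initial datum, so $\tilde S=S$ and $L$ coincides with the constructed solution.

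The main obstacle, I expect, is exactly the passage between the formal and the non-formal levels. I must verify that the recursive factorization, the splitting into $(\cdot)_D$ and $(\cdot)_S$, and the reverse dressing all respect the class of genuine odd-class operators and are compatible with the Fr\"olicher and Fr\'echet regularity, rather than holding only modulo smoothing operators. The device that keeps this under control is the growth condition $a_n\in Cl^{n}_{odd}$ in Definition \ref{dfs}: it is what makes the $h$-grading compatible with the order filtration so that every coefficient produced in the recursion and in each commutator lands in the correct operator class. Checking this compatibility carefully throughout is the delicate part of the argument.
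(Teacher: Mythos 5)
Your construction of the pair $(S,Y)$ and your derivation of the Lax equations are essentially correct, and the factorization argument takes a genuinely different, more self-contained route than the paper. The paper quotes the formal $h$-scaled Birkhoff--Mulase theorem of \cite{Ma2013}, which determines the factors only up to smoothing operators, and then removes the ambiguity by observing that $Y$, being a series of differential operators, is determined exactly by its symbols, after which $W=YU_h^{-1}$ is also exact. You instead solve $SU_h=Y$ coefficientwise in the $h$-grading: writing $A_n=\sum_{k=0}^{n-1}s_ku_{n-k}$, the splitting of Proposition \ref{SD} (which is a statement about genuine operators) forces $s_n=-(A_n)_S$ and $y_n=(A_n)_D$, and the growth condition $a_n\in Cl^n_{odd}$ keeps every coefficient in the right class, as you note. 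This avoids the formal/non-formal passage altogether for existence. Your identity $\left(\frac{d}{dt_n}Y\right)Y^{-1}-\left(\frac{d}{dt_n}S\right)S^{-1}=h^nSL_0^nS^{-1}$ and the ensuing splitting argument coincide with the paper's proof of the third bullet (including the same looseness about the factors $h^n$).

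The genuine gap is in your uniqueness argument for the non-formal hierarchy (\ref{trueKP}), on two counts. First, the system $\frac{d}{dt_n}\tilde S\,\tilde S^{-1}=-(L^n)_S$, $\tilde S(0)=1$, involves infinitely many times simultaneously: regularity of $Cl^{*}_{h,odd}(S^1,V)$ only produces solutions of logarithmic-derivative equations in a single time variable, so to obtain one $\tilde S(t_1,t_2,\dots)$ solving all equations at once you must verify the zero-curvature (Zakharov--Shabat) compatibility conditions for the family $-(L^n)_S$, extracted from the hypothesis that $L$ solves the hierarchy; this step is missing. Second, and more seriously, the inference ``one checks that $\tilde SL_0\tilde S^{-1}$ solves the same system with the same initial datum, so $\tilde S=S$'' is circular: that two solutions with the same initial datum coincide is precisely the uniqueness being proved, and you cannot identify $\tilde S$ with $S$ by ODE uniqueness either, since $\tilde S$ solves a system whose right-hand side is built from the arbitrary solution $L$ while $S$ solves the one built from the constructed solution. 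The non-circular completion of your reverse-dressing idea is: show $\frac{d}{dt_n}\left(\tilde S^{-1}L\tilde S\right)=0$, hence $L=\tilde SL_0\tilde S^{-1}$; then check that the right logarithmic derivative of $\tilde Y:=\tilde SU_h$ equals $-\left(h^nL^n\right)_S+h^nL^n=\left(h^nL^n\right)_D$ (with consistent $h^n$ bookkeeping), which lies in the differential Lie algebra, so that by regularity and uniqueness of the logarithmic-derivative problem in the ambient group $\tilde Y$ stays in $Cl^{*}_{h,odd}(S^1,V)_D$; then $U_h=\tilde S^{-1}\tilde Y$ is a factorization as in your first bullet, whose uniqueness finally yields $\tilde S=S$, $\tilde Y=Y$, and $L=SL_0S^{-1}$.

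For comparison, the paper proves uniqueness by a shorter mechanism that sidesteps both issues: formal uniqueness (quoted from \cite{Ma2013}) implies any other solution is $L+K$ with $K$ smoothing; since smoothing operators form an ideal, $(L+K)^n_D=L^n_D$, so $K$ satisfies the linear equation $\frac{dK}{dt_n}=[L^n_D,K]$ with $K|_{t=0}=0$, and an induction that freezes all but one time variable (so only one-time regularity is needed) forces $K=0$. Note also that projecting modulo $PDO^{-\infty}(S^1,V)$, as you propose, yields existence of the formal solution of (\ref{formalKP}) but not its uniqueness, which your proposal leaves unaddressed; the paper obtains it from \cite{Ma2013}.
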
	
\begin{proof}
First of all, we consider $U_h$. Since
$$
U_h(t_1,...,t_n,...) = 
\exp\left(\sum_{n \in N^*} h^nt_n (L_0)^n\right) \in 
Cl_{h,odd}^\ast(S^1,V)[[ht_1,...,h^nt_n...]] \; ,
$$
we can write
$$
U_h = \sum_{s \in T} A_s (h \tau)^s \; ,
$$
in which $h \tau = (h t_1 , h^2 t_2 , h^3 t_3 , \cdots)$ and
$A_s \in Cl^\ast_{h,odd}(S^1 , V)$. In turn, for each $s \in T$
we can set $A_s = \sum_{n \in \N} a_{s n} h^n$, where 
$a_{s n} \in Cl^{n}_{odd}(S^1,V)$, $n\geq 1$ and $a_{s 0} \in Cl^{0,\ast}_{odd}(S^1,V)$. Thus, we have
$$
U_h = \sum_{s \in T} \left( \sum_{n \in \N} a_{s n} h^n \right)
(h \tau)^s \; .
$$
Now we observe that, since $a_{s n} \in Cl^{n}_{odd}(S^1,V)$,
the total symbol of $a_{sn}$ can be written as
$$
\sigma(a_{s n}) = \sum_{-\infty < k \leq n} a_{sn k}\xi^k
$$
in which $a_{snk} : S^1 \rightarrow \R \otimes End(V)$.
(The pass from pseudodifferential operators to symbols is discussed
in detail in, e.g., \cite[Section 2]{ARS2} and \cite[p. 55]{ARS3}).
This means that we can write
\begin{eqnarray}
\sigma(U_h) & = & 
\sum_{s \in T} \left( \sum_{n \in \N} 
\left( \sum_{-\infty < k \leq n} a_{sn k}\xi^k \right) h^n \right)(h \tau)^s
\nonumber \\
 & = & \sum_{n \in \N} \left[ \sum_{-\infty < k \leq n} \left( \sum_{s \in T} a_{sn k} (h \tau)^s \right) \xi^k \right] h^n \; .
 \label{sym}
\end{eqnarray}
Equation (\ref{sym}) tells us that $\sigma(U_h)$ belongs to the
algebra $\Psi_h(R)$, in which $R$ is the algebra of power series in
$\tau$ whose coefficients belong to the differential algebra of
smooth functions $C^\infty(S^1)\otimes End(V)$. See Definition 4.3
in \cite{MR2016}. (Also, we can say that $\sigma(U_h) \in 
\widetilde{\mathcal{A}}$, where $\widetilde{\mathcal{A}}$ is 
defined in Section 5.4 of \cite{Ma2013}). Now we use that
$a_{s 0} \in Cl^{0,\ast}_{odd}(S^1,V)$ and that therefore its total 
symbol is of the form 
$$
\sigma(a_{s 0}) = \sum_{-\infty < k \leq 0} a_{s0k} \xi^k
= a_{s00} + \sum_{-\infty < k \leq -1} a_{s0k} \xi^k
$$
in which $a_{s00}$ is invertible. Let us set
$$
a(\tau)_{nk} = \sum_{s \in T} a_{sn k} (h \tau)^s \; .
$$
It follows that $\sigma(U_h)$ can be written as
\begin{eqnarray}
\sigma(U_h) & = & 
\sum_{-\infty < k \leq 0} a(\tau)_{0k} \xi^k + \sum_{n \geq 1} \left[ \sum_{-\infty < k \leq n} a(\tau)_{nk} \xi^k \right] h^n   \nonumber \\
 & = & a(\tau)_{00} + \sum_{-\infty < k \leq -1} a(\tau)_{0k}\xi^k
 + \sum_{n \geq 1} \left[ \sum_{-\infty < k \leq n} a(\tau)_{nk} \xi^k \right] h^n \; .
\end{eqnarray}
Since $a_{s00}$ is invertible, we conclude that $\sigma(U_h)$
belongs to $G\Psi_q(R)$ (see Definition 4.3 in \cite{MR2016}; we 
can also say that $\sigma(U_h) \in G_{\mathcal{A}}$ in the notation 
of \cite{Ma2013}). Now we use Equation (4.14) of \cite{MR2016}.
There exist unique $S_{symb} \in G_{R,h}$ and $Y_{symb} \in \mathcal{D}_q(R)$ such that
$$
\sigma(U_h) = S_{symb}^{-1} Y_{symb}  \; .
$$ 
Now, there exist non-formal odd class operators $Y$ and $S$ defined up to smoothing operators such that $S_{symb} = \sigma(S)$ and
$Y_{symb} = \sigma(Y)$, and so we can write
$$
\sigma(U_h) = \sigma(S)^{-1} \sigma(Y)  \; .
$$
The symbol $\sigma(Y)$ is a formal series in $h, t_1, \cdots t_n,\cdots$ of symbols of differential operators, which are 
	in one-to-one correspondence with a series of (non-formal)
    differential operators. Thus, the operator $Y$ is uniquely defined, not up to a smoothing operator; it depends 
    smoothly on $U_h$, and so does $S = Y U_h^{-1}.$ This ends the proof of the first point. 
	
	The second point on the $h-$deformed KP hierarchy is proven along the lines of \cite{Ma2013,MR2016}, since it corresponds 
	essentially to an existence result for symbols. 
	
Finally, we prove the third point: We have that $L= Y L_0 Y^{-1}$ 
is well-defined and, following classical computations which can be 
found in e.g. \cite{ER2013,MR2016}, we have:
	 \begin{enumerate} \item
		$L^k = Y L_0^{\;k} Y^{-1}$
		\item  $U_h\,L_0^{\;k} U_h^{-1}
		= L_0^{\;k}$ since $L_0$ commutes with $U_h = \exp (\sum_k h^kt_k\,L_0^{\; k}).$
	\end{enumerate}
	
	It follows that $L^k = Y L_0^{\;k} Y^{-1} = W W^{-1}
	Y L_0^{\;k} Y^{-1} W W^{-1} = W L_0^{\;k} W^{-1}$.
	
	We take $t_k$-derivative of $U$ for each $k \geq 1$. We get the equation
	\[
	\frac{d U_h}{dt^k} = - W^{-1}\frac{dW}{dt_k} W^{-1} Y + S^{-1} \frac{dY}{dt_k}
	\]
	and so, using $U_h = S^{-1}\,Y$, we obtain the decomposition
	\[
	W L_0^{\; k} W^{-1} = - \frac{dW}{dt_k} W^{-1}  + \frac{dY}{dt_k} Y^{-1} \; .
	\]
	Since $\frac{dW}{dt_k} W^{-1} \in Cl_{h,odd}(S^1,V)_S$ and $\frac{dY}{dt_k} Y^{-1} \in
	Cl_{h,odd}(S^1,V)_D$, we conclude that
	$$(L^k)_D = \frac{dY}{dt_k} Y^{-1} \; \; \mbox{ and } \; \; (L^k)_S = - \frac{dW}{dt_k} W^{-1}.$$ Now we take 
	$t_k$-derivative of $L$:
	\begin{eqnarray*}
		\frac{d L}{d t_k} & = & \frac{dY}{dt_k} L_0 Y^{-1} - Y L_0 Y^{-1} \frac{dY}{dt_k}
		Y^{-1} \\
		& = & \frac{dY}{dt_k} Y^{-1} Y L_0 Y^{-1} - Y L_0 Y^{-1}\frac{dY}{dt_k}
		Y^{-1} \\
	    & = &{	 (L^k)_D\, L - L\, (L^k)_D } \\
	    & = &{	 [ (L^k)_D , L ] \; . }
	\end{eqnarray*}
	We check the initial condition: We have $L(0) = Y(0)
	L_0 Y(0)^{-1}$, but $Y(0) = 1$ by the definition of $U_h.$
	
	Smoothness with respect to the variables $(S_0, t_1,...,t_n,...)$ is already proved by construction, 
	and we have established smoothness of the map $L_0 \mapsto Y$ at the beginning of the proof. Thus, the map
	$$L_0 \mapsto L(t)= Y(t) L_0 Y^{-1}(t)$$ is smooth. The corresponding equation 
$$\frac{d}{dt_k}L = -\left[(L^k)_S,L\right]$$ is obtained the same way. 
	
It remains to check that the announced solution is the unique solution 
to the non-formal hierarchy	(\ref{trueKP}). This is still true at the 
formal level, but two solutions which differ by a smoothing operator may 
appear at this step of the proof. Let $(L+K)(t_1,...)$ be another 
solution, in which $K$ is a smoothing operator depending on the 
variables $t_1,...$, and $L$ is the solution derived from $U_h.$ 
Then, for each $n \in \N^*$ we have 
	 	$$(L+K)^n_D = L^n_D\; ,$$ 
	 	which implies that 
	 	$K$ satisfies the {\em linear} equation 
	 	$$ \frac{d K}{dt_n} = [L^n_D,K]$$ 
	 	with initial conditions 
	 	$K|_{t=0} = 0.$  We can construct the unique solution $K$ by 
	 	induction on $n$, beginning with $n=1$. 
	 	Let  $g_n$ be such that 
$$
(g_n^{-1} dg_n)(t_n) = L^n_D(t_1,...t_{n-1},t_n,0,...) \; .
$$ 
Then we get that 
$$
K(t_1,...t_n,0....) = Ad_{g_n(t_n)}\left(K(t_1,t_{n-1},0...)\right) \; , 
$$
and hence, by induction, 
$$K(0)=0 \Rightarrow K(t_1,0...)=0 \Rightarrow \cdots \Rightarrow 
K(t_1,...t_n,0....) =0 \Rightarrow \cdots \; ,
$$
which implies that $K=0.$   
\end{proof}

\section{KP equations and $Diff_{+}(S^1)$} \label{S1}

Let $A_0 \in Cl_{odd}^{-1}(S^1,V)$, and set $S_0 = \exp(A_0)$. The operator $S_0 \in Cl_{odd}^{-1,\ast}(S^1,V)$ is our 
version of the dressing operator of standard KP theory, see for instance \cite[Chapter 6]{D}. We define the operator $L_0$
by
$$
f \mapsto L_0(f) = h \left( S_0 \circ \frac{d}{dx} \circ S_0^{-1} \right) (f)
$$
for $f \in C^\infty(S^1,V)$. We note that $L_0^k(f) = h^k S_0 \frac{d^k}{dx^k}(S_0^{-1}(f))$, a formula which we will use
presently. Our aim is to connect the operator
$$
U_h = \exp\left(\sum_{n \in \N*} h^n t_n L_0^n\right) \; ,
$$
which generates the solutions of the $h-$deformed KP hierarchy described in Theorem \ref{hKP}, with the  Taylor expansion
of functions in the image of the twisted operator 
$$
A : f \in C^\infty(S^1,V) \mapsto S_0^{-1}(f) \circ g \; ,
$$
in which $g \in Diff_+(S^1)$. {
	We remark that 
$A \in Cl^{-1,*}_{odd}(S^1,V)$ for each $g \in Diff_+(S^1)$; 
our decomposition theorem proven in the appendix (see Theorem \ref{SY}) will imply that it
is also smooth with respect to $g$}. 

For convenience, we identify $S^1$ with $[0;2\pi[\sim \mathbb{R}/2\pi\mathbb{Z},$ assuming implicitly that all the 
values under consideration are up to terms of the form $2 k \pi,$ for $k \in \mathbb{Z}.$ Set $c = S_0^{-1}(f)\circ g
\in C^\infty(S^1,V)$. We compute:
\begin{eqnarray*}
c(x_0+h) & = & \left( S^{-1}_0(f) \circ g \right)(x_0 + h) \\ 
& \sim_{x_0} & \left( S^{-1}_0(f) \circ g \right)(x_0) + \sum_{n \in \mathbb{N}^*}
\left[ \frac{h^n}{n !}\,  \frac{d^n}{dx^n} \left( S^{-1}_0(f) \circ g \right) \right](x_0) \\ 
& = & \left(  S^{-1}_0(f) \circ g \right)(x_0) + \\
&   & \sum_{n \in \mathbb{N}^*} 
\left[\frac{h^n}{n!}\sum_{k = 1}^n B_{n,k}(u_1(x_0),...,u_{n-k+1}(x_0))\frac{d^k}{dx^k}\left(S_0^{-1}(f)\circ g\right)(x_0)\right] \; ,
\end{eqnarray*}
in which we have used the classical Fa\'a de Bruno formula for the higher chain rule in terms of Bell's polynomials 
$B_{n,k}$, and $u_i(x_0) = g^{(i)}(x_0)$ for $i = 1, \cdots n-k+1$. We can rearrange the last sum and write
\begin{eqnarray*}
c(x_0+h) & \sim_{x_0} & \left( S^{-1}_0(f) \circ g \right)(x_0) + \\
&  & \sum_{k \in \mathbb{N}^*} \sum_{n \geq k} \left[ \frac{h^n}{n !}\, B_{n,k}(u_1(x_0),...,u_{n-k+1}(x_0)) \frac{d^k}{dx^k}\left(S_0^{-1}(f)\right)\right] (g(x_0))  
\end{eqnarray*}
or,
\begin{eqnarray} \label{t1}
c(x_0+h) & \sim_{x_0} & \sum_{k \in \mathbb{N}} \left[ a_k h^k \frac{d^k}{dx^k}\left(S_0^{-1}(f)\right)\right] (g(x_0))  
\end{eqnarray}
in which $a_0=1$ and 
$$ a_k = \sum_{n \geq k} \frac{h^{n-k}}{n!} B_{n,k}(u_1(x_0),...,u_{n-k+1}(x_0)) $$
for $k \geq 1$. In terms of the operator $L_0$, Equation (\ref{t1}) means that 
\begin{eqnarray} \label{t2}
c(x_0+h) & \sim_{x_0} & S_0^{-1} \sum_{k \in \mathbb{N}} \left[ a_k\, L_0^k(f)\right] (g(x_0))  \; .
\end{eqnarray}
We now define the sequence $(t_n)_{n \in \N^*}$ by the formula
\begin{equation} \label{t3} 
\log\left( \sum_{k \in \N} a_kX^k \right)  = \sum_{n \in \N^*} t_n X^n \; ,
\end{equation}
so that both, $a_k$ and $t_n$, are series in the variable $h$. We obtain
\begin{eqnarray*} 
c(x_0+h) & \sim_{x_0} & S_0^{-1} \exp \left( \sum_{n \in \mathbb{N}^*} \frac{t_n}{h^n}\, L_0^k(f) \right) (g(x_0))  \; .
\end{eqnarray*}
We state the following theorem:
\begin{Theorem}
	Let $f \in C^\infty(S^1,V)$ and set $c = S_0^{-1}(f) \circ g \in C^\infty(S^1,V).$ 
	The Taylor series at $x_0$ of the function $c$ is given by
	$$c(x_0+h) \sim_{x_0} S_0^{-1} \left( U_h(t_1/h,t_2/h^2,...)(f) \right)(g(x_0)) \; ,$$
	in which the times $t_i$ are related to the derivatives of $g$ via Equation $(\ref{t3})$.
\end{Theorem}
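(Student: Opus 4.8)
The plan is to obtain the statement directly from the Fa\`a di Bruno computation displayed just above it, the only genuinely new ingredient being the identification of the resulting operator series with a value of the operator $U_h$. Accordingly I would regard formulas (\ref{t1})--(\ref{t3}) as the substance of the argument and devote the proof itself to the final bookkeeping of the powers of $h$ and to the substitution that turns a scalar generating-function identity into an operator identity.

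First I would recall that Taylor-expanding $c = S_0^{-1}(f)\circ g$ at $x_0$ and inserting the higher chain rule in Bell-polynomial form produces (\ref{t1}), whose coefficient $a_k$ collects, at each fixed order $k$ in the scaled derivative $h\frac{d}{dx}$, the contributions of all the Bell polynomials $B_{n,k}$ with $n\geq k$. Because $h^k\frac{d^k}{dx^k} = \left(h\frac{d}{dx}\right)^k$ and the dressing operator satisfies $h^k\frac{d^k}{dx^k}\left(S_0^{-1}(f)\right) = S_0^{-1}\left(L_0^k(f)\right)$, expansion (\ref{t1}) is exactly (\ref{t2}): the function obtained by applying the operator series $\sum_{k\in\N} a_k L_0^k$ to $f$, then $S_0^{-1}$, and evaluating at $g(x_0)$. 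The decisive observation is that every term of this series is a power of one and the same operator $L_0$, so these powers commute and the scalar formal identity $\sum_{k\in\N} a_k X^k = \exp\left(\sum_{n\in\N^*} t_n X^n\right)$ of (\ref{t3}) may be specialised at $X = L_0$, giving $\sum_{k\in\N} a_k L_0^k = \exp\left(\sum_{n\in\N^*} t_n L_0^n\right)$ as an identity of formal series in $h$.

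It then remains to match this exponential with $U_h$. From the definition $U_h(s_1,s_2,\ldots) = \exp\left(\sum_{n\in\N^*} h^n s_n L_0^n\right)$, the substitution $s_n = t_n/h^n$ cancels the prefactor $h^n$ and yields $U_h(t_1/h,t_2/h^2,\ldots) = \exp\left(\sum_{n\in\N^*} t_n L_0^n\right)$; inserting this into (\ref{t2}) gives the asserted formula. I expect the main obstacle to be not any isolated step but the disciplined tracking of the two independent sources of powers of $h$ --- those carried by $L_0$ itself, one per differentiation, and those hidden inside the coefficients $a_k$ and hence inside the times $t_n$ defined by (\ref{t3}) --- since it is exactly the interplay between them that forces the rescaling $t_n\mapsto t_n/h^n$ in the statement. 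A secondary point demanding care is the legitimacy of reordering the double sum over $n\geq k$ and of the substitution $X = L_0$ as operations on asymptotic ($\sim_{x_0}$) expansions; this is ensured by working throughout in the Fr\'echet algebra $Cl_{h,odd}(S^1,V)$ of Theorem \ref{fs}, where all these manipulations are finite order by order in $h$.
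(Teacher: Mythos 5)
Your proposal is correct and takes essentially the same route as the paper: Taylor expansion plus Fa\`a de Bruno giving (\ref{t1}), the identification $h^k\frac{d^k}{dx^k}\circ S_0^{-1} = S_0^{-1}\circ L_0^k$ giving (\ref{t2}), then specialisation of the generating-function identity (\ref{t3}) at $X=L_0$ followed by the rescaling $s_n=t_n/h^n$ in the arguments of $U_h$. Your bookkeeping of the powers of $h$ is in fact slightly more careful than the paper's own: its last display before the theorem reads $\exp\bigl(\sum_{n}\frac{t_n}{h^n}L_0^n\bigr)$ with a stray factor $h^{-n}$ (and $L_0^k$ where $L_0^n$ is meant), whereas your version $\exp\bigl(\sum_n t_n L_0^n\bigr)=U_h(t_1/h,t_2/h^2,\ldots)$ is the one consistent with the theorem as stated.
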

The coefficients of the series $a_k$ and $t_n$ appearing in (\ref{t3}) depend smoothly on $g \in Diff_+(S^1)$ and
$x_0 \in S^1$. Indeed, the map
$$
(x,g) \in S^1 \times Diff_+(S^1) \mapsto \left( g(x), (u_n(x))_{n \in \N^*} \right) \in S^1 \times \R^{\N^*}
$$
is smooth due to Proposition \ref{prod2} (more precisely, due to the generalization of Proposition \ref{prod2} to
infinite products); smoothness $a_k$ then follows, while smoothness of $t_n$ is consequence of
Equation (\ref{t3}).

\medskip

\begin{remark} 
As a by-product of the foregoing computations, we notice the following relation. If $f \in C^\infty(S^1,V),$ 
we can write 
$$ f(x_0+h) \sim_{x_0} f(x_0)+ \sum_{n \in \mathbb{N}^*}\left(\frac{h^n}{n!}\left(\frac{d}{dx}\right)^n f\right) (x_0) = 
   \left(\exp\left(h\frac{d}{dx}\right)f\right)(x_0) \in J^\infty(S^1,V)$$
for $x_0 \in S^1.$
Thus, the operator $\exp\left(h\frac{d}{dx}\right)$ belongs to the space $Cl_h(S^1,V).$
\end{remark}

\section*{Appendix:the group of $\boldsymbol{Diff_+(S^1)-}$pseudodifferential operators}
Now we present a restricted class of groups of 
Fourier integral operators which we will call 
$Diff_+(S^1)$-pseudodif\-fer\-en\-tial operators following \cite{Ma2016}.
These groups appear as central extensions of $Diff_+(S^1)$ by groups of (often bounded) pseudodifferential operators. 
We do not state the basic facts on Fourier integral operators here (they can be found in the classical paper \cite{Horm}),
but we recall the following theorem, which was stated in \cite{Ma2016} for a general base manifold $M$.

\begin{Theorem} \label{DiffPDO} \cite[Theorem 4]{Ma2016}
	Let $H$ be a regular Lie group of pseudodifferential operators acting on smooth sections of  a trivial bundle 
	$E \sim V \times S^1 \rightarrow S^1.$
	The group $Diff(S^1)$ acts smoothly on  $C^\infty(S^1,V),$ and it is assumed to act smoothly on $H$ by adjoint action.
	If $H$ is stable under the $Diff(S^1)-$adjoint action, then there exists a regular Lie group $G$ of Fourier integral
	operators defined through the exact sequence:
	$$ 
1 \rightarrow H \rightarrow G \rightarrow Diff(S^1) \rightarrow  1
	\; .
	$$ 
If $H$ is a Fr\"olicher Lie group, then $G$ is a Fr\"olicher Lie group.
\end{Theorem}

\noindent 
This result was proven in \cite{Ma2016} by applying Theorem 
\ref{exactsequence}. Using the equivalence between 
Gateaux-smooth and Fr\"olicher-smooth in the Fr\'echet category 
stated after Definition 2 and proven in \cite{MR2016}, 
we have a Fr\'echet version of Theorem \ref{DiffPDO}: if $H$ is 
a regular Fr\'echet Lie group which is stable under 
$Diff(S^1)-$adjoint action, and $G$ is a smooth Fr\'echet manifold 
isomorphic to $H \times Diff(S^1)$ with multiplication and 
inversion Fr\"olicher (hence Fr\'echet) smooth, we have the 
equivalence:
$$H 
\hbox{ is a {\em regular} Fr\'echet Lie group } \Leftrightarrow G 
\hbox{ is a {\em regular} Fr\'echet Lie group\; .}$$  

The pseudodifferential operators considered in Theorem 
\ref{DiffPDO} can be classical, odd class, or anything else. 
Applying the formulas of ``changes of coordinates'' (which can be 
understood as adjoint actions of diffeomorphisms) of e.g. 
\cite{Gil}, we obtain that odd-class pseudodifferential operators 
are stable under the adjoint action of $Diff(S^1).$  Thus, we can 
define the following group:

\begin{Definition}
	The group $FCl_{Diff(S^1),odd}^{0,*}(S^1,V)$ is the regular 
	Fr\'echet Lie group $G$ obtained in Theorem 
	$\ref{DiffPDO}$ with $H=Cl^{0,*}_{odd}(S^1,V).$  
\end{Definition}

Following \cite{Ma2016}, we remark that operators $A$ in this group 
can be understood as operators in $Cl^{0,*}_{odd}(S^1,V)$ twisted 
by diffeomorphisms, this is, 
\begin{equation} \label{aux}
A = B \circ g 
\end{equation}
for unique $g \in Diff(S^1)$ and unique 
$B \in Cl^{0,*}_{odd}(S^1,V)$, 
and also that its Lie algebra is isomorphic as a vector space to
$Cl^0_{odd}(S^1,V)\oplus Vect(S^1)$, in which $Vect(S^1)$ is the 
space of smooth vector fields on $S^1$.

\begin{remark} \label{Baaj} 
	The diffeomorphism $g$ appearing in $(\ref{aux})$ is
	the phase of the operators, but here the phase (and hence the 
	decomposition $(\ref{aux})$) is unique, which 
	is not the case for general Fourier integral operators, see e.g. 
	\cite{Horm}.
	This construction of phase functions of 
	$Diff(M)-$pseudodifferential operators
	differs from the one described by Omori \cite{Om} 
	and Adams, Ratiu and Schmid \cite{ARS2} for the groups of Fourier 
	integral operators; the exact relation among these
	constructions still needs to be investigated.
\end{remark} 

Now we note that 
the group $Diff(S^1)$ decomposes into two connected components 
$Diff(S^1) = Diff_+(S^1) \cup Diff_-(S^1),$
where the connected component of the identity, $Diff_+(S^1)$, is 
the group of orientation preserving diffeomorphisms 
of $S^1$. We make the following definition:

\begin{Definition}
	The group $FCl_{Diff_+(S^1),odd}^{0,*}(S^1,V)$ is the regular 
	Fr\'echet Lie group of all operators in 
	$FCl_{Diff(S^1),odd}^{0,*}(S^1,V)$ whose phase diffeomorphisms
	lie in the group $Diff_+(S^1).$
\end{Definition}

\begin{Theorem}\label{SY}
	Let $U \in FCl^{0,*}_{Diff_+(S^1),odd}(S^1,V).$ There exists an 
	unique pair 
	$$(S,Y) \in Cl^{-1,*}_{odd}(S^1,V)\times 
	\left(DO^{0,*}(S^1,V) \rtimes Diff_+(S^1)\right)$$
	such that $$U = S\,Y\;.$$
	Moreover, the map $U \mapsto (S,Y)$ is smooth and, there is a short 
	exact sequence of Lie groups:
	$$
	{ 1} \rightarrow Cl^{-1,*}_{odd}(S^1,V) \rightarrow 
	FCl_{Diff_+(S^1),odd}^{0,*}(S^1,V) \rightarrow 
	DO^0(S^1,V)\rtimes Diff_+(S^1) \rightarrow { 1}$$
	for which the $Y$-part defines a smooth global section, and which 
	is a morphism of groups.
\end{Theorem}
\begin{proof}
	We already know that $U$ splits in an unique way as
	$ U = A_0\,.\,g\; ,$
	in which $g \in Diff_+(S^1)$ and $A_0 \in Cl^{0,*}_{odd}(S^1,V).$
	By Proposition \ref{SD}, the pseudodifferential operator $A_0$ can 
	be written uniquely as a sum, $A = A_I + A_D$, 
	in which $A_D \in DO^0(S^1,V) \subset Cl_{odd}(S^1,V)$.
	Since $A_0$ is invertible, $\sigma_0(A_0) \in C^\infty(S^1,GL(V))$ 
	and hence $A_D \in DO^{0,*}(S^1,V).$ We can write  
	$$U = A_0. A_D^{-1}.A_D.g.$$
	We get $Y=A_D.g \in DO^{0,*}(S^1,V)\rtimes Diff_{+}(S^1)$ and 
	$S = A_0. A_D^{-1} \in Cl^{0,*}_{odd}(S^1,V)$    
	(the inverse of an odd class operator is an odd class operator).  
	Let us compute the principal symbol $\sigma_0(S)$:
	$$\sigma_0(S)= \sigma_0(A_0)\sigma_0(A_D^{-1})= 
	\sigma_0(A_0)\sigma_0(A_0)^{-1} = Id_V\; .$$
	Thus, $S \in Cl^{-1,*}_{odd}(S^1,V).$ Moreover, the maps 
	$U \mapsto g$ and  $A_0 \mapsto A_D$ are smooth, and this 
	observation ends the proof.
\end{proof}

Let us summarize our constructions. The semi-direct product of 
{Fr\'echet} Lie groups 
$$
FCl^{0,*}_{Diff_+(S^1),odd}(S^1,V) = 
Cl^{0,*}_{odd}(S^1,V)\rtimes Diff_+(S^1)
$$
\smallskip
fully described by the exact sequence
$$\begin{array}{ccccccccc}
1 & \rightarrow & Cl^{0,*}_{odd}(S^1,V) & \rightarrow & 
FCl^{0,*}_{Diff_+(S^1),odd}(S^1,V) & \rightarrow & Diff_+(S^1) & 
\rightarrow & 1
\end{array}
$$
 and by the associated sequence of Lie 
algebras
$$\begin{array}{ccccccccc}
0 & \rightarrow & Cl^{0}_{odd}(S^1,V) & \rightarrow & Cl^{0}(S^1,V) \rtimes Vect(S^1) & \rightarrow & Vect(S^1) & \rightarrow & 0\; ,
\end{array}
$$
{
in which we have used (\ref{alfa}) and (\ref{beta0}) in order to understand differential operators having symbols of order 1 as elements of $Vect(S^1)\otimes Id_V$,  
}
can be completed by the following diagram in which vertical and 
horizontal lines are short exact sequences of Lie groups:
$$\begin{array}{ccccccccc}
&&&&1&&1&&\\
&&&&\downarrow&&\downarrow&&\\
1 & \rightarrow & Cl^{-1,*}_{odd}(S^1,V) & \rightarrow & Cl^{0,*}_{odd}(S^1,V) & \rightarrow & DO^{0,\ast}(S^1,V)  & \rightarrow & 1\\
&&\|&&\downarrow &&\downarrow &&\\
1 & \rightarrow & Cl^{-1,*}_{odd}(S^1,V) & \rightarrow & FCl^{0,*}_{Diff_+(S^1),odd}(S^1,V) & \rightarrow & DO^{0,\ast}(S^1,V)\rtimes Diff_+(S^1) & \rightarrow & 1\\
&&&&\downarrow&&\downarrow&&\\
&&&&Diff_+(S^1)&=&Diff_+(S^1)&&\\
&&&&\downarrow&&\downarrow&&\\
&&&&1&&1&&\\
\end{array}
$$

The corresponding diagram of Lie algebras, all of them embedded in 
$Cl_{odd}(S^1,V)$ is:

$$\begin{array}{ccccccccc}
&&&&0&&0&&\\
&&&&\downarrow&&\downarrow&&\\
0 & \rightarrow & Cl^{-1}_{odd}(S^1,V) & \rightarrow & Cl^{0}_{odd}(S^1,V) & \rightarrow & DO^{0}(S^1,V)  & \rightarrow & 0\\
&&\|&&\downarrow &&\downarrow &&\\
0 & \rightarrow & Cl^{-1}_{odd}(S^1,V) & \rightarrow & Cl^{0}_{odd}(S^1,V) \rtimes Vect(S^1) & \rightarrow & DO^{0}(S^1,V)\rtimes Vect(S^1) & \rightarrow & 0\\
&&&&\downarrow&&\downarrow&&\\
&&&&Vect(S^1)&=&Vect(S^1)&&\\
&&&&\downarrow&&\downarrow&&\\
&&&&0&&0&&\\
\end{array}
$$

We end this appendix by considering exponential mappings. We can do so, 
since the Lie groups $Cl^{-1,*}_{odd}(S^1,V),$  
$FCl_{Diff_+(S^1),odd}^{0,*}(S^1,V)$ and 
$DO^0(S^1,V)\rtimes Diff_+(S^1)$ are regular (see our discussion at 
the beginning of this section and Definition 24). 
Let us consider a curve $L(t)$ in the Lie algebra of 
$FCl_{Diff_+(S^1),odd}^{0,*}(S^1,V)$ which, thanks to Proposition 
25 and the Mulase decomposition, we can identify (as a vector 
space) with $Cl^{-1}_{odd}(S^1,V)\oplus DO^1(S^1,V)$.
Thus, we assume 
$$ 
L(t) \in C^\infty([0;1], Cl^{-1}_{odd}(S^1,V)\oplus DO^1(S^1,V) )  
$$ 
and we write $L(t) = L_D(t) + L_S(t)$. 
We compare the exponential 
$\exp(L)(t) \in C^\infty([0;1],FCl^{0,*}_{Diff_+(S^1),odd}(S^1,V))$ 
with 
$$
\exp(L_D)(t) \in C^\infty \left( [0;1],DO^{0,*}(S^1,V))\rtimes 
Diff_+(S^1) \right)
$$
and $$
\exp(L_S)(t) \in C^\infty([0;1],Cl^{-1,*}_{odd}(S^1,V))\; .
$$
On the one hand, we can write 
$$
\exp(L)(t) = S(t)Y(t)
$$ 
according to Theorem \ref{SY}, and we know that the paths 
$t\mapsto S(t)$ 
and $t \mapsto Y(t)$ are smooth. On the other hand, using the 
definition of the left exponential map, we get
$$
\frac{d}{dt}\exp(L)(t) = exp(L)(t). L(t) \; .
$$
Thus, gathering the last two expressions we obtain 
\begin{eqnarray*}
	\frac{d}{dt}\exp(L)(t) & = & \frac{d}{dt}\left( S(t) Y(t)\right) \\
	& = & \left(\frac{d}{dt}S(t)\right)S^{-1}(t)S(t) Y(t)+ S(t)Y(t)Y^{-1}(t)\left(\frac{d}{dt}Y(t)\right)\\
	& = & \left(\frac{d}{dt}S(t)S^{-1}(t)\right) \exp(L)(t)+ \exp(L)(t)Y^{-1}(t)\left(\frac{d}{dt}Y(t)\right) \\
	& = & \exp(L)(t)\left(Ad_{\exp(L)(t)^{-1}}\left( \left(\frac{d}{dt}S(t)S^{-1}(t)\right)\right) + 
	Y^{-1}(t)\left(\frac{d}{dt}Y(t)\right)  \right)\; .
\end{eqnarray*}
Now, $Y^{-1}(t) \frac{d}{dt}Y(t)$ is a smooth path on the space of 
differential operators of order 1, and we have 
$$
Ad_{\exp(L)(t)^{-1}}\left( \left(\frac{d}{dt}S(t)S^{-1}(t)\right)
\right) \in Cl^{-1}_{odd}(S^1,V) \; .
$$
These calculations allow us to prove the following:

\begin{Proposition}
	Let us assume that $L(t)$ is a curve in the Lie algebra of the 
	group $FCl^{0,*}_{Diff_+(S^1),odd}(S^1,V)$, that
	$L(t) = L_S(t) + L_D(t)$ with $L_S(t) \in Cl^{-1}_{odd}(S^1,V)$ 
	and $L_D(t) \in DO^1(S^1,V)$, and that $\exp(L)(t) = S(t) Y(t)$. 
	Then, 
	$$Y(t) = \exp(L_D)(t)$$
	and 
	$$ S(t) = \exp\left(Ad_{\exp(L)(t)}\left(L_S\right)\right)(t)\; .$$
\end{Proposition} 
\begin{proof}
	We have already obtained that
	$$L_D = Y(t)^{-1}\frac{d}{dt}Y(t)$$
	and that 
	$$L_S =
	Ad_{\exp(L)(t)^{-1}}\left( \left(\frac{d}{dt}S(t)S^{-1}(t)\right)
	\right)
	$$
	because of the uniqueness of the decomposition $$L=L_S + L_D\; .$$
	We obtain the result by  passing to the exponential maps on the 
	groups $Cl^{-1,*}_{odd}(S^1,V)$ and 
	$DO^{0,*}(S^1,V)\rtimes Diff_+(S^1).$ 
\end{proof}}
\vskip 12pt	

\paragraph{\bf Acknowledgements:} Both authors have been partially
supported by CONICYT (Chile) via the {\em Fondo Nacional de 
Desarrollo Cient\'{\i}fico y Tecnol\'{o}gico} operating grants 
\# 1161691 and \# 1201894. The authors would like to thank Saad 
Baaj for comments leading to Remark \ref{Baaj}.

\end{document}